\documentclass[manuscript, nonacm]{acmart}

\setcopyright{none}
\copyrightyear{2021}
\acmYear{2021}
\acmDOI{}
\acmConference[]{}{}{}
\acmBooktitle{}
\acmPrice{}
\acmISBN{}

\usepackage{microtype}
\usepackage{latexsym}
\usepackage{amsmath}
\usepackage{amsthm}
\usepackage{enumitem}
\usepackage{graphicx}

\DeclareMathOperator{\poa}{PoA}
\DeclareMathOperator{\ibr}{IoBR}
\DeclareMathOperator{\opt}{OPT}
\DeclareMathOperator{\br}{BR}

\newcommand{\A}{\ensuremath{\mathcal A}}
\newcommand{\F}{\ensuremath{\mathcal F}}

\setcitestyle{acmnumeric}

\title[Classifying Convergence Complexity of Nash Equilibria in Graphical Games Using Distributed Computing Theory]{Classifying Convergence Complexity of Nash Equilibria in Graphical Games Using Distributed Computing Theory}

\author{Juho Hirvonen}
\affiliation{%
	\institution{Aalto University}
	\city{Espoo}
	\country{Finland}
}
\email{juho.hirvonen@aalto.fi}

\author{Laura Schmid}
\affiliation{%
	\institution{IST Austria}
	\city{Klosterneuburg}
	\country{Austria}
}
\email{laura.schmid@ist.ac.at}

\author{Krishnendu Chatterjee}
\affiliation{%
	\institution{IST Austria}
	\city{Klosterneuburg}
	\country{Austria}
}
\email{krishnendu.chatterjee@ist.ac.at}

\author{Stefan Schmid}
\affiliation{%
	\institution{University of Vienna}
	\city{Vienna}
	\country{Austria}
}
\email{stefan_schmid@univie.ac.at}

\begin{abstract}
	Graphical games are a useful framework for modeling the interactions of (selfish) agents who are connected via an underlying topology and whose behaviors influence each other. They have wide applications ranging from computer science to economics and biology. 
	Yet, even though a player's payoff only depends on the actions of their direct neighbors in graphical games, computing the Nash equilibria and making statements about the convergence time of ``natural'' local dynamics in particular can be highly challenging. In this work, we present a novel approach for classifying complexity of Nash equilibria in graphical games by establishing a connection to local graph algorithms, a subfield of distributed computing. In particular, we make the observation that the equilibria of graphical games are equivalent to locally verifiable labelings (LVL) in graphs; vertex labelings which are verifiable with a constant-round local algorithm. This connection allows us to derive novel lower bounds on the convergence time to equilibrium of best-response dynamics in graphical games. Since we establish that distributed convergence can sometimes be provably slow, we also introduce and give bounds on an intuitive notion of ``time-constrained'' inefficiency of best responses. We exemplify how our results can be used in the implementation of mechanisms that ensure convergence of best responses to a Nash equilibrium. Our results thus also give insight into the convergence of strategy-proof algorithms for graphical games, which is still not well understood.
\end{abstract}

\begin{document}

\maketitle

\section{Introduction}

Modeling the interactions of multiple selfish agents, whose decisions and behavior influence each other and are in some way dependent on an underlying topology, is an important aspect of solving problems from a wide range of diverse fields. Game theory offers a natural approach to this issue in the form of multiplayer network games~\cite{jackson2015games,bramoulle07public}. Be it resource allocation~\cite{Avni2016}, routing~\cite{roughgarden2007routing}, provision of public goods~\cite{jackson2015games}, virus inoculation~\cite{Aspnes2006}, voting~\cite{Stewart2019}, or the spread of trends through social networks~\cite{abramson2001social}, instances of network games are everywhere to be found. They all have in common that interactions and players' decisions are in some way governed by the underlying graph and that players' payoffs usually depend on the network that links them. 

Previous literature suggests that general network games, such as the rich class of congestion games~\cite{roughgarden2007routing}, can be difficult to analyze, especially if treating them as a single monolithic group. Considering their heterogeneity and complexity, a considerable amount of research has thus been dedicated to more restricted versions of such games. One approach is to consider simplifications of specific network games such as singleton congestion games~\cite{bhawalkar2014weighted}. 
A different and less limiting approach is to focus on ``succinctly representable'' network games: each player's utility is solely determined by their own and their direct neighbors' actions. These so-called graphical games, introduced by Kearns et al.~\cite{kearns01graphical} capture the locality of effects on players, and are useful for settings where there are only a few strong direct influences on every agent, i.e.~when the underlying graph has a low degree~\cite{bramoulle07public}. These games have strong connections to Bayesian probabilistic inference networks (``probabilistic graphical models''). 

Graphical games have been subject to a wide range of research in their own right, see~\cite{Kearns2007survey} for a comprehensive survey. Computing the Nash equilibria of graphical games or proving results about their properties or convergence is however challenging in general. Previous work has dealt with aspects such as deciding whether a given strategy profile is a Nash equilibrium or whether equilibria exist, and classifying the complexity of these tasks~\cite{Schoenebeck2012}. The (approximate) computation of equilibria, often for various specific types of graphs, has also been a focus of a considerable number of studies. Previous work uses a variety of approaches for this. Kearns et al.~give a polynomial dynamic programming algorithm for computing approximate equilibria in trees, inspired by belief propagation. This algorithm can also be extended to a distributed message-passing scheme for more general topologies~\cite{Ortiz2003}; however, it is important to note that the result is not guaranteed to be efficient, nor is it strategy-proof: Indeed, players can have incentives to deviate from the algorithm if they are actually attempting to reach the Nash equilibrium they compute, as the computation does not reflect natural and rational game dynamics. Follow-up work gave some intractability results for this algorithm when it is used to calculate exact equilibria~\cite{Elkind2006}.

Other works try to generalize beyond specific graph structures: Daskalakis and Papadimitriou~\cite{Daskalakis2006} relate graphical games to random Markov nets. This reduction can then be used in the computation of pure Nash equilibria.
The authors further show polynomial complexity of finding pure equilibria on graphs with bounded treewidth, and give exponential algorithms for computing approximate mixed equilibria. Other papers, such as~\cite{Vickrey2002}, use multi-agent algorithms like hill-climbing or constraint satisfaction approaches to calculate approximate equilibria. These heuristics show good performance, but have no worst-case guarantees. Correlated equilibria in graphical games and related intractability results have also been of interest, for example in~\cite{papadimitriou2008computing}. Furthermore, Jackson and Yariv~\cite{jackson2007diffusion} investigate best-response dynamics and diffusion of behavior in a dynamic form of graphical games. They present comparative statics results and investigate (Bayesian) equilibrium stability when behaviors can propagate.

Despite this rich and multifaceted literature, we still lack a systematic understanding and classification of graphical games in terms of (distributed) computational complexity; in particular, the convergence time of strategy-proof algorithms and local dynamics such as best-response to Nash equilibria is still not properly understood, except for special cases such as~\cite{Komarovsky15}. 

This paper presents a novel approach for shedding light on the distributed complexity of Nash equilibria and the dynamics of the convergence behavior in graphical games, using the perspective of distributed computing. In particular, we establish a connection to distributed local graph algorithms~\cite{Peleg2000}, which is not only natural and intuitive, but also allows us to leverage some of the analytical techniques and powerful results developed in this field over the last decades. More specifically, we show that the equilibria of graphical games are equivalent to locally verifiable labelings (LVL) in graphs~\cite{korman2010proof}: LVLs are solutions to distributed graph problems equivalent to the game. In a nutshell, a labelling is locally verifiable if and only if there exists a constant-round distributed algorithm that can check if the labelling is correct around each node.
For verification, each node thus simply checks the strategy that is assigned to itself and its neighbors, and can hence locally check if this action is in equilibrium. 

This close connection has only been rarely explored, for example in work that considers extensive game formulations of distributed algorithms~\cite{collet2018equilibria}. We leverage it to derive several more general results on graphical games.
First, we prove that best response dynamics can converge only as fast as a distributed algorithm can compute 
solutions to an equivalent graph problem. With this, we can classify the convergence properties and efficiency of \emph{local} dynamics in network games by classifying the corresponding LVL problem.

Since we observe an inherently slow distributed convergence time in many scenarios, 
we then give a more fine-grained view on the costs of distributed computations, and introduce a new notion of a time-constrained inefficiency of best response dynamics, a natural measure of efficiency when convergence to Nash equilibria can be slow. We exemplify our results with two simple and fundamental graphical games that have been well studied in the literature: the one-shot public goods game~\cite{bramoulle07public} and a simple minority game~\cite{challet1997emergence} (also called ``social game'' or ``restaurant game'' in previous literature, e.g.~\cite{Bramouille07social} ). For these games, we present convergence properties, lower bounds, and efficiency results, which all serve to highlight the key novelty of our approach.

We further show that in cases where best responses do not converge to the equilibrium despite the provable existence of an efficiently computable one, our techniques can be used to point to mechanism implementations where this can be guaranteed. Since best-response dynamics are rational with respect to the restricted local knowledge of nodes, our work also connects to the open question of strategy-proof algorithms for distributed Nash equilibria computation that is posited in~\cite{Kearns2007survey}.

\section{Model}
In the following, we provide the necessary preliminaries for our results by giving definitions of game-theoretic concepts and revisiting important cornerstones of distributed complexity theory and distributed graph problems.

\subsection{Graphical games}
 
We first define a multiplayer game in its general form as consisting of $n$ players $i \in \mathcal{I}$, each equipped with a pure strategy space or action space $A_i$. The cartesian product $A_1 \times A_2 \times ...\times A_n$ of the action spaces of individual agents is denoted by $\mathcal{A}$. Furthermore, every player has a utility function $u_i(\vec{a})$ for each profile of strategies or actions $\vec{a}=(a_1,a_2,...a_n)$, i.e. a mapping $u_i \colon \mathcal{A} \times N \to \mathbb{R}$. We use $a_{-i}$ to denote the joint strategy profile of all players except for player $i$.

Throughout the paper, we will focus on pure strategies, and later give an outlook on mixed strategy extensions in the conclusion section. A mixed strategy $\sigma$ is a probability distribution over pure strategies or actions, with the corresponding strategy space $\Sigma_i$ for each player. We can then define joint mixed strategy profiles as elements of the product space $\Sigma= \times_i \Sigma_i$.

We can now define the central concept of (pure) \emph{Nash equilibrium}. A strategy or action profile $\vec{a^*}$ is a Nash equilibrium if for all players $i$ it holds that 

\begin{equation}
u_i(a^*_i,a^*_{-i}) \geq u_i(a_i,a^*_{-i}) ~~ \forall a_i \in A_i
\end{equation}

In other words, no player in a Nash equilibrium can gain a higher utility by unilaterally deviating. We say that the strategy $a^*_i$ is a \emph{best response} to the rest of the strategy profile $a^*_{-i}$. 

In the case of mixed strategies, a Nash equilibrium is characterized by the relation

\begin{equation}
u_i(\sigma^*_i,\sigma^*_{-i}) \geq u_i(a_i,\sigma^*_{-i}) ~~ \forall a_i \in A_i,
\end{equation}

\noindent i.e. it suffices to check pure strategy deviations to confirm that a mixed strategy profile is an equilibrium.

We can now define \emph{graphical games}, given by a triple $(\mathcal{A}, u, N)$. As introduced in~\cite{kearns01graphical}, graphical games are a concisely representable form of \emph{multiplayer games} on networks.
A \emph{network} or a \emph{graph} $N = (V,E)$ consists of a set of $n$ \emph{nodes} $V$ and a set $E$ of \emph{edges} between pairs of nodes. The nodes of the network $N$ then represent the agents, or players, in the graphical game. We define the local neighborhood of a node $v$ as $B(v) \subseteq \{1,..,n\} = \{j \in V,\, (v,j) \in E\}$, with $v \in B(v)$ as well. As in normal form multiplayer games, each agent $v$ is equipped with an action space $A_v$, and $\mathcal{A}$ denotes the product of these action spaces. A player's \emph{utility function} $u_v(\vec{a}^v)$ now depends only on a strategy profile restricted to their local neighborhood $B(v)$, i.e. the partial strategy profile $\vec{a}^v$. We denote the product of the individual agents' utility functions by $u$.

Throughout this work, we will assume that the network $N$ has a constant maximum degree $\Delta$. 

In this work, we consider the question of \emph{convergence} to Nash equilibria via \emph{best-response dynamics}, a specific example of \emph{local dynamics}. We assume that players can update their strategy between rounds of the game. To this end, they use the rule to update their strategy with the best response to their neighborhood's strategy profile. That is, the action of a node $v$ in round $t$ is a best response to the partial strategy profile $\vec{a}^v(t-1)$. There is some fixed order on the actions that is used to break ties. We assume that nodes only have local information and cannot look beyond their neighborhood. Their restricted knowledge makes such local dynamics rational. In this work, we will subsequently only consider best responses. However, our results also hold for more general local dynamics.

In order to have a reasonable definition of a running time for local dynamics, we define a model of \emph{fair best responses}. The play consists of \emph{fair rounds}. During each round the adversary schedules all agents to act exactly once and one at a time. The convergence time of best-response dynamics on a fixed network is the maximum number of fair rounds until all players have reached a Nash equilibrium over, all possible orders of play. For random initial strategy profiles we say that best-response dynamics converge if they converge with high probability over the initial strategy assignment.

\subsection{Distributed complexity theory and locally verifiable labelings}

We next present some preliminaries on distributed graph algorithms and complexity.
In particular, this paper will establish a connection of graphical games to the
\emph{LOCAL} model of computation~\cite{Linial1992,Peleg2000}: in this model, we are given
a fixed network $N=(V,E)$ (a graph) connecting $n=|V(N)|$ nodes. The nodes
collaboratively aim to solve a given task, but can only communicate with their neighbors 
in the graph. The computation proceeds in \emph{synchronous} rounds, and in each round each node in the graph can send a message to each of its neighbors
(there is no bound on the message size), receive a message from each neighbor, and perform
local computations (there is no bound on the complexity of these local computations). Initially the nodes do not know anything about the input network. Each node is responsible for computing its own part of the output. 
The \emph{(distributed) complexity} of a local algorithm is measured in the number of communication rounds
until all nodes have computed their outputs.

To identify the nodes, each node gets an $O(\log n)$-bit unique name as an input. We will also consider the randomized LOCAL model, where instead each node has access to its own private source of random bits. The randomized LOCAL model is at least as strong as the deterministic LOCAL model, as randomness can be used to generate unique identifiers with high probability.

The crucial property of the LOCAL model is that in $t$ communication rounds each node can gather exactly its $t$-hop neighborhood in the network. Information that is outside this radius cannot affect the actions of the node, since it hasn't had time to travel to it.  Since communication is not limited, the nodes can gather \emph{all} information about their $T$-hop neighborhood in $T$ rounds. This implies that distributed algorithm with complexity $T$, without loss of generality, can be thought of as function that maps the input-labelled $T$-hop neighborhoods to the possible outputs.

The $T$-neighborhood of a node $v$ is denoted by
$B(v,T)$. The radius-1 neighborhood is denoted simply by $B(v)$.

We are particularly interested in a class of LOCAL problems 
called \emph{locally verifiable labelling~(LVL)} problems.
An LVL consists of an \emph{alphabet} $\Sigma$ and a set of \emph{configurations} $\mathcal{C}$. The alphabet $\Sigma$ is simply some possibly infinite set of labels. Each configuration $C \in \mathcal{C}$ is a subgraph
of $N$ centered on some node $v$ with radius at most $k$ for some constant $k$ (the verification radius). Each node of $C$ is labelled with some element $\sigma \in \Sigma$. In this work we only consider LVLs with radius $k = 1$. 
A mapping $f\colon V \to \Sigma$ of the graph $N$ is a solution to $P$ if and only if each $f$-labelled $k$-neighborhood of $N$ is a configuration in $\mathcal{C}$. 
Locally verifiable labellings are a generalization of \emph{locally checkable labellings (LCLs)}~\cite{Naor1995}, a family of problems that has been studied extensively in recent years (for example \cite{Brandt2016,Chang19exponential,Chang19time,Balliu2018stoc,chang18complexity,brandt19automatic,balliu19binary}).

An important result in LOCAL algorithms theory, which will also be relevant for our work, is related to a ``complexity gap'': on general bounded-degree graphs, i.e., if the maximum degree of the graph $N$ is bounded by a constant $\Delta$,
the deterministic distributed complexity of an LVL is either $O(\log^* n)$, 
or $\Omega(\log n)$ and the randomized complexity is either $O(\log^* n)$ or $\Omega(\log \log n)$~\cite{Chang19exponential}. 
Here, $\log^* n$ is the \emph{iterated logarithm} (pronounced ``log-star''), 
a function which grows significantly more slowly
than the logarithm: e.g.\ $\log^*$ of the number of atoms in the observable universe is 5. Formally, $\log^* n$ is defined as:
$$
\forall x\leq 2: \log^* x:=1, ~~ \forall x>2: \log^* x:= 1 + \log^*(\log x)
$$
The complexity gap on bounded-degree networks is also the best possible: there exists an LVL such that its deterministic complexity is $\Theta(\log n)$ and the randomized complexity is $\Theta(\log \log n)$~\cite{Chang19exponential,Brandt2016,ghaffari17distributed}. This also proves that the deterministic and randomised complexities of an LVL can be exponentially separated.

On other graph families, the complexity gap can be even larger.
For example, on paths and cycles LVLs have complexity either $O(\log^* n)$ or $\Theta(n)$~\cite{Chang19exponential}, and on grids and toruses on $n$ nodes, either $O(\log^* n)$ or $\Omega(\sqrt{n})$ \cite{Brandt2017}.

Throughout this paper, we will call a distributed algorithm with complexity $O(\log^* n)$ \emph{efficient}.

\section{Distributed computing and graphical games}

Our work is motivated by our observation that graphical games and local algorithms are fundamentally connected. In particular, we observe that all Nash equilibria (and in fact all equilibria that are based on local information only) are LVLs. If we assume that agents playing a game converge to an equilibrium they are implicitly solving the corresponding computational task.

The LOCAL model has two particular properties. First, it is possible to prove \emph{unconditional} impossibility results in the LOCAL model. Existing results cover many LVLs that are potentially interesting from the perspective of game theoretical applications~(see e.g.\  \cite{balliu19maximal,Chang19exponential,balliu19weak,Brandt2016,Brandt2017,chang18complexity,Linial1992}). 
Second, it is a strong distributed model in the sense that algorithms in the LOCAL model are only limited by information propagation, and not e.g.\ by failures or bandwidth limitations. This means that any \emph{impossibility results} proven in the LOCAL model apply to a wide range of more realistic models. In particular, they apply to any model of games where the play of the agents is constrained by the available information.

An interesting property of LVLs in the LOCAL model is the complexity gap: they can either be computed efficiently in $O(\log^* n)$ rounds, or require $\Omega(\log n)$ rounds in the deterministic LOCAL model and $\Omega(\log \log n)$ rounds in the randomized LOCAL model \cite{Chang19exponential}. If we can show that the Nash equilibria of a game are not efficiently solvable then this implies that in that game cannot best responses cannot converge fast. 

To formalize this connection, we must transfer existing impossibility results for computational tasks in the distributed setting to a model of games. To do this, we first establish an equivalence between the Nash equilibria of graphical games and locally verifiable labellings. Then, to transfer impossibility results to our model of fair sequential best responses, we show that the LOCAL model can simulate best responses. This implies that if all Nash equilibria of a game are hard to compute as LVLs, then the best-response dynamics cannot converge to those equilibria fast.

\begin{theorem} \label{thm:ne-are-lvls}
	Let $G = (\A, u, N)$ be a graphical game. The Nash equilibria of $G$ uniquely define a locally verifiable labelling $P$.
\end{theorem}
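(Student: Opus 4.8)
The plan is to construct the LVL $P = (\Sigma, \mathcal{C})$ explicitly from the game data and then observe that ``$f$ is a solution of $P$'' unfolds, essentially by definition, into ``$f$ is a Nash equilibrium of $G$''. The construction hinges on the defining feature of graphical games — that $u_v$ is a function of the partial profile $\vec a^v$ on $B(v)$ only — which is exactly what makes verification radius $k = 1$ sufficient. Concretely, I would take $\Sigma := \bigcup_{v \in V} A_v$, the (possibly infinite) union of all action spaces, and let $\mathcal{C}$ be the set of all $f$-labelled radius-$1$ neighborhoods $B(v)$ of $N$ such that (i) every $w \in B(v)$ is labelled by an action $f(w) \in A_w$, and (ii) the center label is a best response of player $v$ to the labels on the rest of the neighborhood, i.e. $u_v\bigl(f(v), f|_{B(v) \setminus \{v\}}\bigr) \ge u_v\bigl(a_v, f|_{B(v) \setminus \{v\}}\bigr)$ for every $a_v \in A_v$. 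Since configurations are, by definition, subgraphs of the fixed network $N$, they carry the identity of their center (equivalently, one may assume each player's local view includes its own utility function and action space), so condition (ii) is well-defined even when two players have isomorphic neighborhoods but different payoff functions.

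The core step is the equivalence of solutions, and with the construction in place it is essentially a tautology: by the definition of an LVL solution, $f \colon V \to \Sigma$ solves $P$ iff for every node $v$ the $f$-labelled neighborhood $B(v)$ lies in $\mathcal{C}$; by the construction of $\mathcal{C}$ this holds iff every player's label is an action in its own action space and a best response to its neighbors' labels; and, using that $u_v$ depends only on $\vec a^v$, this is word-for-word the definition of a pure Nash equilibrium of $G$. The verification radius is $1$, as required by the LVL notion used here, and both $\Sigma$ and $\mathcal{C}$ are determined by $G$ alone.

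For the uniqueness assertion I would argue that the configuration set above is the canonical one compatible with radius-$1$ verification: a local picture is admitted precisely when it satisfies the best-response condition, which is the strongest purely local constraint the graphical structure permits. Any other radius-$1$ LVL whose solution set equals the set of Nash equilibria of $G$ can differ from $P$ only by configurations that are never realized in a global solution, hence has the same solution set and is equivalent to $P$; this is the sense in which the Nash equilibria of $G$ determine $P$ uniquely, and I would state explicitly that ``unique'' here means up to this solution-set equivalence of LVLs.

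The point requiring the most care is the claim that radius-$1$ verification faithfully captures ``best response'' with no appeal to global information: this is exactly where the graphical-game hypothesis is indispensable, since $u_v$ being a function of $\vec a^v$ is what lets a single node, seeing only $B(v)$, certify that its own action is optimal. A secondary subtlety is the bookkeeping that distinguishes players with identical local graph structure but different utilities — handled by configurations living in the concrete network $N$ (together with the standard convention that a player knows its own payoff function, which is game data rather than part of the unknown input network) — and checking that infinite action spaces cause no trouble, which they do not, since the LVL alphabet is allowed to be infinite.
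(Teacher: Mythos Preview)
Your proposal is correct and follows essentially the same approach as the paper: take $\Sigma$ to be the union of action sets, let $\mathcal{C}$ consist of all radius-$1$ labelled neighborhoods of $N$ in which the center plays a best response, and observe that the Nash-equilibrium condition is, by locality of $u_v$, exactly the LVL solution condition. Your treatment is in fact more careful than the paper's on several points (the type constraint $f(w) \in A_w$, the meaning of ``uniquely'', and the bookkeeping for players with isomorphic neighborhoods but different utilities), none of which the paper spells out.
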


\begin{proof}
	This follows from the locality of the utility functions of graphical games.
	Let $G = (\A, u, N)$ be a game on a network, and let $\vec{a}$ be some strategy profile of $G$. Observe that $\vec{a}$ is a Nash equilibrium if and only if for each agent $v \in V(N)$ their current strategy $a_v$ maximizes their utility over all choices, and this only depends on the strategies $a_u$ for all neighbors $u \in B(v)$.
	That is, we can define the set of Nash equilibria of a game $G$ as an LVL $P(G)$ as follows: let $\mathcal{S}$ consist of every radius-1 subgraph $S$ of $N$. Let $\mathcal{C}$ consist of every copy of each $S \in \mathcal{S}$ labelled with the actions of $G$ such that the strategy of the center node is a best response to the strategies of its neighbors with respect to $u$. We have that the alphabet $\Sigma$ consists of all possible actions of $\A$.
\end{proof}

It should be stressed that the correspondence does not use any properties specific to pure Nash equilibria. For example mixed Nash equilibria of graphical games also define LVLs.

Next we show how best responses can be simulated in the LOCAL model. Let $G = (\A, u, N)$ be a graphical game. We construct a corresponding instance in the LOCAL model by taking the network $N$. Then, if we consider the deterministic LOCAL model, an adversary assigns $O(\log n)$-bit names to the nodes. In the randomized LOCAL model, each node gets a uniformly random infinite string of bits as input instead. We show that if the best responses converge in $T(n)$ rounds, then this can be turned into a distributed algorithm that computes the corresponding Nash equilibrium in $O(\log^* n + T(n))$ rounds.

\begin{theorem}[best responses correspondence] \label{thm:local-dynamics-local}
	Fix a function $T$. Consider a graphical game $G$.
	\begin{enumerate}[noitemsep]
		\item If the best-response dynamics converge on $G$ in $T(n)$ rounds from a constant initial strategy profile, then there exists a deterministic distributed algorithm in the LOCAL model that solves the LVL corresponding to the Nash equilibria of each $G$ in $O(\log^* n + T(n))$ rounds.
		\item If the best-response dynamics converge with high probability on $G$ in $T(n)$ rounds from a random initial strategy profile, then there exists a randomized distributed algorithm in the LOCAL model that solves the LVL corresponding to the Nash equilibria of $G$ in $O(\log^* n + T(n))$ rounds with high probability.
	\end{enumerate}
\end{theorem}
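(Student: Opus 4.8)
The plan is to exhibit a single synchronous LOCAL algorithm that \emph{simulates} the sequential fair best-response dynamics one fair round at a time, paying only a constant-factor overhead per round together with a one-time $O(\log^* n)$ cost for symmetry breaking. The one genuine mismatch to overcome is that a fair round is an inherently \emph{sequential} object --- the adversary activates the agents one at a time --- whereas the LOCAL model is synchronous. I would bridge this by first computing a proper vertex colouring of $N$ with a constant number $c$ of colours, which is possible since $\Delta$ is constant: Linial's colouring algorithm~\cite{Linial1992} produces such a colouring in $O(\log^* n)$ deterministic rounds from the $O(\log n)$-bit identifiers, and in the randomized model the same bound holds once each node draws an $O(\log n)$-bit name from its private random string, which is globally unique with high probability.

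Given the colouring, one fair round is simulated in $c = O(1)$ LOCAL rounds: in the $j$-th such round, for $j = 1, \dots, c$, every node of colour $j$ reads the current strategies of its neighbours, computes its best response using the game's fixed tie-breaking order on actions, adopts it, and broadcasts it. The key observation is that two nodes of the same colour are non-adjacent in $N$, while a node's best response depends only on the strategies in $B(v)$; hence the updates within a colour class are mutually independent, and their simultaneous execution yields exactly the state that would result from activating those nodes one at a time in any order. Consequently the schedule ``all colour-$1$ nodes, then all colour-$2$ nodes, \dots, then all colour-$c$ nodes'' is a legitimate ordering of a single fair round, and the synchronous simulation stays in lockstep with the dynamics after every fair round.

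Assembling the algorithm: it first spends $O(\log^* n)$ rounds on the colouring; each node then sets its strategy to the prescribed initial profile --- a fixed constant, hence locally known, in part~(1), or an independently drawn random strategy obtained from the private random bits in part~(2); it then runs the simulation for $T(n)$ fair rounds, i.e.\ $c \cdot T(n) = O(T(n))$ LOCAL rounds; finally each node outputs its current strategy. By hypothesis the dynamics have reached a Nash equilibrium after $T(n)$ fair rounds --- deterministically in part~(1), and with high probability over the random initialisation in part~(2) --- and by Theorem~\ref{thm:ne-are-lvls} this equilibrium is precisely a solution of the LVL $P(G)$ corresponding to $G$. The total round complexity is $O(\log^* n) + O(T(n)) = O(\log^* n + T(n))$, giving both claims.

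The step I expect to be the main obstacle is the argument of the second paragraph: making precise that a colour class may be activated in parallel without any node ``observing'' another node's move within the same sub-round, so that the synchronous run and the sequential dynamics are provably identical after each fair round (one should also check the boundary behaviour once the profile is already a Nash equilibrium, where every update is a no-op). A secondary and routine point is the randomized case, where the failure event that two names collide and the failure event that the dynamics have not converged within $T(n)$ fair rounds must be combined into a single high-probability guarantee; this follows from a union bound after inflating the name length by a constant factor.
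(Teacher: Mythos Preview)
Your proof is correct and follows essentially the same strategy as the paper's: spend $O(\log^* n)$ rounds computing a constant colouring, then use the colour classes as a round-robin schedule to simulate each fair round of best responses in $O(1)$ synchronous rounds. The only difference is that the paper colours $N^2$ (a distance-$2$ colouring with $\Delta^2+1$ colours) whereas you use an ordinary $(\Delta+1)$-colouring of $N$; your observation that distance-$1$ already suffices is correct and slightly sharper, since a node's best response depends only on its immediate neighbours' strategies, so any two non-adjacent nodes can be activated simultaneously without affecting one another.
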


The proof of Theorem~\ref{thm:local-dynamics-local} is given in Appendix~\ref{app:local-dynamics-proof}. The theorem implies that the convergence time of best-responses is bounded by impossibility results from distributed computing.

\begin{corollary} \label{cor:local-dynamics-lb} 
	Assume that the Nash equilibria of a graphical game $G$, as LVLs, have deterministic complexity $\Omega(T(n))$ and randomized complexity $\Omega(T'(n))$, for any $T(n), T'(n) = \Omega(\log^* n)$. Then best-response dynamics for $G$ require $\Omega(T(n))$ and $\Omega(T'(n))$ rounds to converge from a constant and a randomized initial strategy profile, respectively.
\end{corollary}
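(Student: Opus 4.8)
The plan is to prove this by contraposition, chaining the two preceding results. Recall that Theorem~\ref{thm:ne-are-lvls} identifies the Nash equilibria of $G$ with a single LVL $P(G)$, and that Theorem~\ref{thm:local-dynamics-local} turns $T_0(n)$-round convergence of best responses into a LOCAL algorithm for $P(G)$ running in $O(\log^* n + T_0(n))$ rounds — deterministic when the initial profile is constant, randomized and correct with high probability when it is random. So I would assume, for contradiction, that best-response dynamics converge from a constant initial strategy profile in $T_0(n) = o(T(n))$ rounds, and invoke Theorem~\ref{thm:local-dynamics-local}(1) to obtain a deterministic LOCAL algorithm for $P(G)$ of round complexity $O(\log^* n + T_0(n))$.

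The second step is the asymptotic bookkeeping: I need $O(\log^* n + T_0(n)) = o(T(n))$ in order to contradict the hypothesis that $P(G)$ has deterministic complexity $\Omega(T(n))$. Since $T_0(n) = o(T(n))$ by assumption, the only thing left to check is that the additive $\log^* n$ overhead is negligible, i.e.\ $\log^* n = o(T(n))$. This is exactly where the hypothesis $T(n) = \Omega(\log^* n)$ enters, and it is cleanest to read it together with the LVL complexity gap of Section~2.2: an LVL is either solvable in $O(\log^* n)$ rounds or needs $\Omega(\log n)$, so whenever $P(G)$ is not already efficient we automatically have $T(n) = \Omega(\log n) = \omega(\log^* n)$ and the overhead is absorbed into the $o(\cdot)$. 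Hence the derived algorithm would beat the assumed $\Omega(T(n))$ lower bound on $P(G)$, a contradiction, so best responses must take $\Omega(T(n))$ rounds from a constant initial profile.

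The randomized case is symmetric: assuming best responses converge with high probability from a random initial profile in $T'_0(n) = o(T'(n))$ rounds, Theorem~\ref{thm:local-dynamics-local}(2) yields a randomized LOCAL algorithm for $P(G)$ that succeeds with high probability in $O(\log^* n + T'_0(n)) = o(T'(n))$ rounds, contradicting the assumed randomized lower bound $\Omega(T'(n))$. A small point to be careful about here is that the ``with high probability'' guarantee and the failure probability are preserved by the simulation in Theorem~\ref{thm:local-dynamics-local}, so that the resulting object is a bona fide randomized LOCAL algorithm in the sense for which the lower bound $\Omega(T'(n))$ is stated.

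The main obstacle I expect is not conceptual but a matter of reconciling asymptotic conventions: making the informal inference ``$g(n) \neq o(T(n)) \Rightarrow g(n) = \Omega(T(n))$'' precise, and pinning down how the additive $O(\log^* n)$ term of Theorem~\ref{thm:local-dynamics-local} interacts with the hypothesis on $T$. I would dispatch this by restricting attention to the regime that actually matters, $T(n), T'(n) = \omega(\log^* n)$, which by the gap theorem is the only non-trivial regime, so the additive term genuinely vanishes, and by phrasing the conclusion in its clean contrapositive form — ``best responses cannot converge in $o(T(n))$ (resp.\ $o(T'(n))$) rounds'' — which is exactly what the simulation statement negates.
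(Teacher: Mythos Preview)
Your proposal is correct and follows essentially the same approach as the paper: the paper's proof is the one-line contrapositive ``if best-response dynamics converge faster, then this can be turned into a fast distributed algorithm, a contradiction,'' which is exactly the chain through Theorem~\ref{thm:local-dynamics-local} that you spell out. Your treatment is in fact more careful than the paper's about the additive $O(\log^* n)$ overhead and how the hypothesis $T(n)=\Omega(\log^* n)$ (together with the gap theorem) is used to absorb it; the paper leaves this implicit.
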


\begin{proof}
	This follows from Theorem~\ref{thm:local-dynamics-local}: if best-response dynamics converge faster, then this can be turned into a fast distributed algorithm, a contradiction.
\end{proof}

The complexity gap of LVLs in the LOCAL model, in the context of our result, implies that if none of the Nash equilibria of a game are efficiently computable, then best responses converge significantly slower.

\begin{corollary} \label{cor:no-fast-convergence}
	Let $G$ be a game such that none of its Nash equilibria can be solved in time $O(\log^* n)$ as LVLs. Then the best-response dynamics require $\Omega(\log n)$ rounds to converge from constant initial strategy profile, and $\Omega(\log \log n)$ rounds to converge from a random initial strategy profile.
\end{corollary}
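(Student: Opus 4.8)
The plan is to derive the statement by composing two results that are already in hand: the complexity gap theorem for LVLs on bounded-degree graphs~\cite{Chang19exponential}, and the best-response lower bound of Corollary~\ref{cor:local-dynamics-lb}. First I would invoke Theorem~\ref{thm:ne-are-lvls} to fix the LVL $P = P(G)$ whose solutions are exactly the Nash equilibria of $G$. A distributed algorithm ``solves $P$'' precisely when it produces a valid labelling of $N$, i.e.\ a Nash equilibrium; hence the hypothesis of the corollary --- that none of the Nash equilibria of $G$ can be computed as LVLs in $O(\log^* n)$ rounds --- is exactly the assertion that the distributed complexity of $P$ is not $O(\log^* n)$. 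Since $N$ has constant maximum degree $\Delta$ by our standing assumption, $P$ is an LVL on a bounded-degree graph family, so the gap theorem applies to it.

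Next I would apply the gap theorem. Because $P$ has complexity $\omega(\log^* n)$, no intermediate regime is available: its deterministic distributed complexity must be $\Omega(\log n)$ and its randomized distributed complexity must be $\Omega(\log \log n)$. This is the only step that uses the special structure of LVLs rather than that of arbitrary LOCAL problems.

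Finally I would plug these two lower bounds into Corollary~\ref{cor:local-dynamics-lb}, taking $T(n) = \log n$ and $T'(n) = \log \log n$. Both functions satisfy $T(n), T'(n) = \Omega(\log^* n)$, so the hypotheses of that corollary are met, and it immediately yields that best-response dynamics need $\Omega(\log n)$ fair rounds to converge from a constant initial strategy profile and $\Omega(\log \log n)$ fair rounds to converge (with high probability) from a random initial strategy profile. This is exactly the claimed bound.

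I do not anticipate a genuine obstacle: all the real work sits inside Theorem~\ref{thm:local-dynamics-local} (the simulation of best responses in the LOCAL model) and the cited gap theorem. The only points needing a little care are the unwinding of the hypothesis --- confirming that ``no equilibrium of $G$ is an $O(\log^* n)$-computable LVL'' is the same statement as ``$P(G)$ has super-$\log^*$ complexity'' --- and checking that it is the bounded-degree form of the gap theorem that is invoked, which is licensed by the constant-degree assumption on $N$ (on more restricted graph families, such as paths or grids, the gap is even larger, and a correspondingly stronger conclusion would follow).
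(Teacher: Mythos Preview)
Your proposal is correct and follows the same approach as the paper: the paper does not give a separate proof environment for this corollary but simply notes that it follows from combining the complexity gap of Chang et al.\ with Corollary~\ref{cor:local-dynamics-lb}, which is precisely the chain of reasoning you spell out. Your write-up is more explicit (in particular, your unpacking of the hypothesis via Theorem~\ref{thm:ne-are-lvls} and your check that $T(n), T'(n) = \Omega(\log^* n)$) but the underlying argument is identical.
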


In the next sections we look at examples of graphical games, their Nash equilibria, and the corresponding LVL problems. We illustrate the correspondence in Figure~\ref{fig:fig1}.

\begin{figure}[ht]
\centering
\includegraphics[width=1\textwidth]{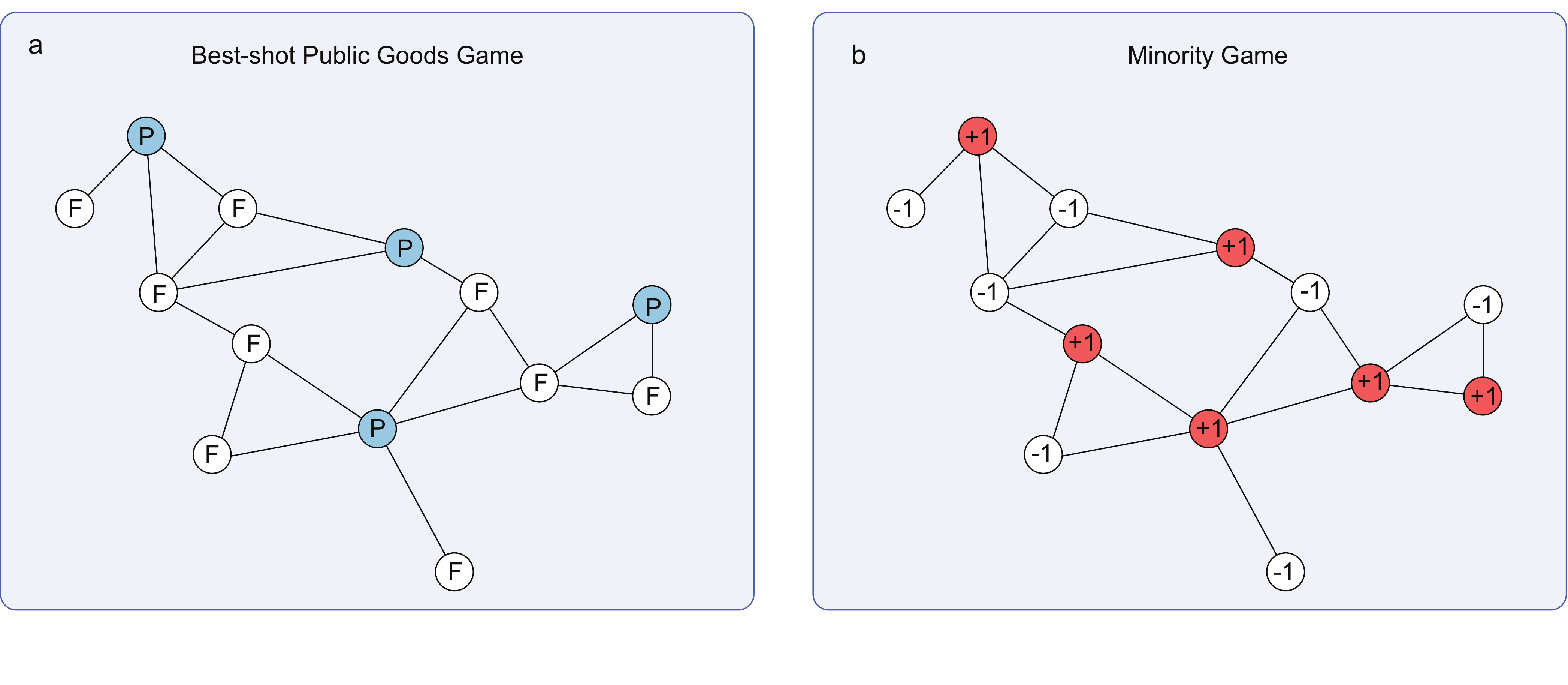}
\caption{\label{fig:fig1} We illustrate the correspondence between Nash equilibria of graphical games and LVLs for the games described in Sections 3.1 and 3.2. \textbf{a, }In the best-shot public goods game, nodes can choose to produce a good (P) or to forego producing (F). The Nash equilibrium of this game is a solution to the maximal independent set problem. A node in this state can not do any better given that the others play their equilibrium strategy, and they can check this by simply by looking at their neighbors. This also means that they can check whether their computed solution to the maximal independent set is correct. \textbf{b, } The minority game gives players a choice between two actions, here -1 and +1. Their goal is to choose the opposite of the majority of their neighbors. The Nash equilibrium is a locally optimal cut: players have at least as many neighbors playing the opposite strategy as the same strategy. This is also a LVL: every agent can again locally check that they are in equilibrium, i.e. the correctness of the computed solution.}
\end{figure}

\subsection{Best-shot public goods game}

This game deals with the provision of public goods such as finding a cure for a disease or filling an important supply, assuming that only the maximal contribution counts towards the provision level instead of the sum of all players' contributions~\cite{cherry2013heterogeneity,bramoulle07public}. In contrast to public goods games often studied as social dilemmas, i.e. where the social optimum is reached by all players cooperating (producing) despite the incentive to do nothing, the problem here is not only one of free-riding, but also one of coordination. Player groups have to figure out which of them should optimally be the one to provide the good, in order to avoid redundant costs. Here we consider the simplest version of the game where players only have two choices, to provide a good, or not to provide it. More formally, each agent has two possible actions, i.e. $A_i=\{P,F\}$. Players' utilities $u_i$ are as following: if a focal agent plays $F$ and one of their neighbors plays $P$, the utility is $u_i=1$. If the agent plays $P$, $u_i=1-c$, but if they and all their neighbors play $F$, the utility is $0$. 
Simply put, providing the public good is costly, and it is preferable for a player to have a neighbor do so; however, they are still better off providing it themselves than if nobody in their neighborhood does so.

The correspondence of this game with distributed graph problems is a prominent one: As shown in~\cite{bramoulle07public}, the Nash equilibria of the best-shot public goods game correspond to \emph{maximal independent sets} of agents playing $P$. The set is independent (i.e.\ no two agents with $P$ are adjacent), as two adjacent agents with the strategy $P$ would have incentive to choose $F$. On the other hand the set is maximal (i.e. each agent plays $P$ or has a neighbor that plays $P$) as otherwise such an agent would have incentive to play $P$.     

Maximal independent set is an efficiently solvable LVL: it can be computed in $O(\log^* n)$ rounds~\cite{panconesi01simple}, and this is known to be the best possible complexity~\cite{Linial1992}. Correspondingly, best responses converge in two fair rounds for the best-shot public goods game. We can compare this with the complexity analysis of best-shot public goods games in~\cite{Komarovsky15}, and point out that our approach considers a natural concept of \emph{distributed} complexity whereas previous work usually takes a different, more centralized view.

\begin{theorem} \label{thm:best-shot-pgg-converge}
	Fair best responses in the best-shot public goods game converge in two rounds from any initial configuration.
\end{theorem}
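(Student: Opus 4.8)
The plan is to extract the best-response rule from the utility functions and then analyze two consecutive fair rounds directly, showing that after the second round the set of agents playing $P$ is a maximal independent set — which, by the correspondence already noted (Nash equilibria of this game are exactly the maximal independent sets of $P$-players), is a Nash equilibrium and hence a fixed point of the best-response update.

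First I would record the best-response rule. Since $0 < c < 1$, the best response of an agent $v$ to its neighborhood profile is uniquely determined: $v$ plays $P$ if none of its neighbors currently plays $P$ (utility $1-c > 0$ versus $0$), and plays $F$ otherwise (utility $1 > 1-c$). There are thus no ties, and the rule is precisely the greedy update for a maximal independent set. I would also note that within one fair round each agent is scheduled exactly once, so an agent's strategy changes at most once per round and, once it has acted in a round, it is fixed for the remainder of that round.

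Next I would prove that after the first fair round the set $I_1$ of agents playing $P$ is independent: for any edge $\{u,v\}$, consider the one of $u,v$ scheduled later in the round, say $v$; when $v$ acts, $u$'s strategy for this round is already fixed, so if $u$ plays $P$ then $v$ plays $F$, and otherwise $u$ stays $F$; either way $u$ and $v$ are not both playing $P$ at the end of the round. Then I would track the second fair round with the invariant that, at every point during it, the current $P$-set is independent and contains $I_1$. The invariant is maintained because: an agent of $I_1$, when scheduled, sees (by the invariant, which forbids a neighbor in the current $P$-set) no neighbor playing $P$, hence re-plays $P$; an agent not in $I_1$ plays $F$ at the start of round two and, if it switches to $P$, does so only when no neighbor plays $P$, preserving independence; and no agent's action ever alters another agent's strategy. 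Consequently the $P$-set $I_2$ at the end of round two is independent.

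Finally I would argue $I_2$ is dominating, hence maximal. Take any $v \notin I_2$; when $v$ was scheduled in round two it played $F$, so at that moment some neighbor $w$ was playing $P$. If $w \in I_1$ then $w$ plays $P$ throughout round two, so $w \in I_2$; if $w \notin I_1$ then $w$ switched to $P$ at its unique scheduled step in round two (necessarily no later than $v$'s step) and, being scheduled only once, stays $P$ to the end of the round, so again $w \in I_2$. Either way $v$ has a neighbor in $I_2$, so $I_2$ is a maximal independent set, i.e. a Nash equilibrium, and since Nash equilibria are fixed points of the best-response update the dynamics have converged after two fair rounds, from any initial configuration. The main obstacle is not depth but care in the round-two bookkeeping: one must use that each agent acts exactly once per round to guarantee that both $I_1$-agents and newly added $P$-agents persist to the end of the round, which is exactly what makes ``one more round suffices'' go through; a path on four nodes all initially playing $P$ shows that two rounds are in fact necessary.
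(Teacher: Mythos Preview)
Your proof is correct and follows essentially the same two-step structure as the paper's: the first fair round forces the $P$-set to be independent, and the second makes it maximal. Your explicit round-two invariant and the four-path tightness example make the bookkeeping more rigorous than the paper's terse version, but the underlying argument is the same.
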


\begin{proof}
	Assume that the system starts with some arbitrary strategy profile. We claim that after the first round the set of agents playing $P$ is independent and after the second round it is maximal.
	
	Assume that after the first round neighboring agents $u$ and $v$ play $P$. Then the one that played last would have seen that the other plays $P$, and their best response is $F$. Therefore the set of agents playing $P$ is independent. In the second round no agent will switch from $P$ to $F$, as all of their neighbors play $F$. If an agent plays $F$ and their neighbors play $F$ it will switch to $P$. After two rounds the agents playing $P$ form a maximal independent set, which is a Nash equilibrium.
\end{proof}

\subsection{Minority game}

In this elementary anti-coordination game (also called social game)~\cite{challet1997emergence,Bramouille07social}, players attempt to do the opposite of what their neighbors are doing. That is, they attempt to anti-coordinate with what the majority of their surrounding co-players do. For example, this could describe a situation where agents try to choose a restaurant to go to that's not overly crowded. We formalize this with a game where players again have two possible actions, i.e. $A_i=\{-1,1\}$. A focal player's utility $u_i$ is defined as $1+|\{ j\in V_N(i): a_j \neq a_i \}| - |\{ j\in V_N(i) : a_j = a_i \}|$, i.e the difference between the number of neighbors with a different label and the same label, plus 1 (to avoid Nash equilibria with utility 0).

The Nash equilibria correspond to strategy profiles where each agent has at least as many neighbors playing the opposite strategy as the same strategy. In this game, we again have a correspondence with a prominent graph problem: in distributed computing, the corresponding LVL is known as locally optimal cut. It is known to be a hard problem~\cite{balliu19weak}: on 3-regular graphs it requires $\Omega(\log n)$ deterministic time and $\Omega(\log \log n)$ randomized time.

\begin{theorem} \label{thm:sg-convergence-lb}
	The convergence time of best-responses for the minority game is $\Omega(\log n)$ from a constant initial state and $\Omega(\log \log n)$ from a random initial state.
\end{theorem}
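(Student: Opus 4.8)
The strategy is to invoke the machinery already assembled in the paper: Theorem~\ref{thm:ne-are-lvls} tells us that the Nash equilibria of the minority game form a locally verifiable labelling, and the discussion preceding the theorem identifies this LVL as the \emph{locally optimal cut} problem on the network $N$ --- each node must have at least as many differently-labelled neighbors as same-labelled ones. The key external fact to cite is the known lower bound from~\cite{balliu19weak}: on $3$-regular graphs, locally optimal cut requires $\Omega(\log n)$ rounds in the deterministic LOCAL model and $\Omega(\log\log n)$ rounds in the randomized LOCAL model. Combining this with Corollary~\ref{cor:local-dynamics-lb} should give the claimed convergence lower bounds essentially immediately.

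First I would observe that since $N$ has constant maximum degree $\Delta$, we may instantiate the game on a family of $3$-regular graphs (this is the hard instance family from~\cite{balliu19weak}); the minority game is well-defined on any bounded-degree network, so this is a legitimate subfamily of instances to which the convergence-time claim must apply. Second, I would note that on these instances every Nash equilibrium \emph{is} a locally optimal cut and vice versa, so ``the Nash equilibria as LVLs'' is exactly the locally optimal cut LVL; hence the deterministic complexity of this LVL is $\Omega(\log n)$ and its randomized complexity is $\Omega(\log\log n)$. Third, I would apply Corollary~\ref{cor:local-dynamics-lb} with $T(n) = \log n$ and $T'(n) = \log\log n$ (both $\Omega(\log^* n)$, as required): if fair best responses converged in $o(\log n)$ rounds from a constant initial state, Theorem~\ref{thm:local-dynamics-local}(1) would yield a deterministic LOCAL algorithm for locally optimal cut running in $O(\log^* n + o(\log n)) = o(\log n)$ rounds, contradicting the lower bound; symmetrically, convergence in $o(\log\log n)$ rounds with high probability from a random initial state would, via Theorem~\ref{thm:local-dynamics-local}(2), give a randomized $o(\log\log n)$-round algorithm, again a contradiction.

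One point that needs a little care is the additive $O(\log^* n)$ overhead in Theorem~\ref{thm:local-dynamics-local}: since $\log^* n = o(\log\log n)$, this slack is harmless for both bounds, so the reductions go through cleanly. The only genuinely substantive ingredient is the cited hardness of locally optimal cut on $3$-regular graphs --- everything else is bookkeeping --- so I would expect the ``main obstacle'' to be purely a matter of making sure the instance family and the notion of convergence line up with the hypotheses of~\cite{balliu19weak} and of Corollary~\ref{cor:local-dynamics-lb} (in particular that the lower bound instances are indeed bounded-degree and that the randomized statement is a high-probability one matching our ``converge with high probability'' definition). Given that alignment, the proof is a two-line corollary.
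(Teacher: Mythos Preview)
Your proposal is correct and follows essentially the same approach as the paper: cite the $\Omega(\log n)$ deterministic and $\Omega(\log\log n)$ randomized lower bounds for locally optimal cut from~\cite{balliu19weak}, then invoke Corollary~\ref{cor:local-dynamics-lb}. The paper's proof is in fact just these two sentences; your additional remarks about the $3$-regular instance family, the $O(\log^* n)$ slack, and the high-probability alignment are accurate elaborations but not strictly needed beyond what the corollary already packages.
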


\begin{proof}
	Balliu et al.\ \cite{balliu19weak} have shown that finding a locally optimal cut requires $\Omega(\log n)$ deterministic time and $\Omega(\log \log n)$ randomized time in the LOCAL model. This, together with Corollary~\ref{cor:local-dynamics-lb} implies the theorem.
\end{proof}

The impossibility result of Balliu et al.\ holds even if the algorithm is promised that the network is a 3-regular tree or a 3-regular graph of high girth. Therefore the result also applies to best responses in these graph families.

\section{Inefficiency of best responses} \label{sec:br-inefficiency}

In this section we study the efficiency of best responses with respect to computational constraints. Traditionally, for example in the context of \emph{Price of Anarchy}~\cite{KoutsoupiasP99}, one compares the total welfare under the \emph{best strategy profile} to the total welfare under the \emph{worst Nash equilibrium}. However, it might be that the best solution is hard to compute in a distributed fashion. In fact, it might be that the worst Nash equilibrium is also hard to compute.

We will define a notion of computational inefficiency of best responses with respect to a time bound $T$. We show that there exist games such that we can bound the inefficiency of best responses away from the Price of Anarchy. This illustrates that Price of Anarchy does not always fairly reflect the quality of solutions computed by best responses when time constraints are taken into account.

As we showed in Theorem~\ref{thm:sg-convergence-lb}, best responses provably might not converge efficiently to an equilibrium. The technique we present in this section allows us to study the evolution of total welfare produced by best responses \emph{even before convergence}. We do this by bounding the total welfare produced by \emph{any fast distributed algorithm}.

To measure the performance of best responses, we compare the total welfare produced by $T$ fair rounds of best responses to the best solution that can be computed in $T$ rounds in the randomized LOCAL model. We assume that the system starts from a random initial strategy profile. For a fixed game $G = (\mathcal{A}, u, N)$, let $\opt(N,T)$ denote the best solution, in terms of total welfare, that can be computed on $N$ in the LOCAL model in $T$. Let $\br(N,T)$ denote the random variable that represents the solution computed on $N$ by $T$ rounds of best responses starting from a random initial strategy. We consider random initial strategy profiles, as a constant (or worst-case) strategy profile can guide best responses to perform poorly, depending on the game.

For a game $G = (\mathcal{A}, u, N)$ and a time bound $T$, we define the $T$-inefficiency of best responses as
\[
	\ibr(G,T) = \frac{u(\opt(N,T))}{E[u(\br(N,T))]}.
\]
To estimate the quantity $\ibr(G,T)$ we will bound $\opt(N,T)$ from above using computational arguments, and bound $\br(N,T)$ from below using both arguments about the behavior of best-response dynamics \emph{and} computational arguments.

Note that when we consider the best strategy profile that is computable in $T$ communication rounds, we consider distributed algorithms for \emph{optimization} problems. That is, the algorithms might not compute a solution corresponding to some Nash equilibrium, but more generally any strategy assignment that tries to optimize the total welfare of all agents.

In the next two sections we show how the inefficiency of best responses can be estimated using tools from distributed computing. 

\subsection{Best-shot public goods game}

We begin by analysing the inefficiency of best responses in the best-shot public goods game. We show that Price of Anarchy can be bounded away from the inefficiency of best responses.

We prove that there exists an infinite family of best-shot public goods game instances such that even though good and bad solutions exist, no distributed algorithm can compute them efficiently. Therefore best responses cannot produce these solutions either.

To construct these instances, we argue that there exist graphs which have good and bad solutions and graphs that look locally the same to the first class of graphs, but do not have any good or bad solutions. We can then argue using standard indistinguishability arguments from distributed computing that fast algorithms perform poorly.

The following theorem states the outcome of our analysis for the best-shot public goods game.

\begin{theorem} \label{thm:bspgg-poa}
	Fix a function $T = o(\log_d n)$. For every $d \geq 3$ and sufficiently large $n_0$, there exists an instance $G = (\mathcal{A},u,N)$ of the best-shot public goods game such that $N$ is a $d$-regular network of size $n \geq n_0$ and the following hold.
	\begin{enumerate}[noitemsep]
		\item $\poa(G) = \frac{1-c/(d+1)}{1-c/2}$.
		\item $\ibr(G,T) \leq \frac{d - c\ln d}{d - c(2+\varepsilon)\ln d}$ for any $\varepsilon > 0$.
	\end{enumerate}
\end{theorem}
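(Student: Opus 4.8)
The plan is to prove the two parts separately, with part~(1) being a direct computation and part~(2) the substantive claim requiring the indistinguishability machinery promised in the preceding discussion. For part~(1), I would recall that Nash equilibria of the best-shot public goods game are exactly the maximal independent sets of $P$-players (by the correspondence cited from \cite{bramoulle07public}), so the welfare of an equilibrium is determined by which independent set is chosen. In a $d$-regular network, a maximal independent set $S$ partitions $V$ into $|S|$ nodes earning $1-c$ and $|V\setminus S|$ nodes earning $1$ (every non-$S$ node has a $P$-neighbor, and no $S$-node has a $P$-neighbor). Thus the total welfare is $n - c|S|$, and the ratio between the best and worst equilibrium is governed by the smallest and largest possible maximal independent set in $N$. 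The best solution (social optimum over all profiles, which here is attained by a minimum maximal independent set — a minimum dominating independent set) and the worst equilibrium (a maximum independent set) must be pinned down for the specific instance $N$ we construct; the value $\poa(G) = \frac{1-c/(d+1)}{1-c/2}$ corresponds to the extreme case where the optimum uses a $1/(d+1)$-fraction of nodes (a perfect dominating set, each $S$-node covering itself and exactly $d$ others) and the worst equilibrium a $1/2$-fraction. So the construction of $N$ in part~(2) must be arranged to realize exactly these two extremes, which I would do by building $N$ out of gadgets admitting both a perfect dominating independent set and a near-perfect-matching-like independent set.

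For part~(2), the core is the indistinguishability argument sketched in the paragraph before the theorem: I would construct, alongside the ``good'' network $N$ (which has both a very good and a very bad maximal independent set), a family of ``bad'' networks $N'$ whose radius-$T$ neighborhoods are isomorphic (as labelled graphs, in distribution, for the randomized model) to those of $N$, but in which \emph{every} maximal independent set has size close to the typical greedy value $\approx \frac{\ln d}{d}\,n$ — this is the size one expects from a random/greedy MIS on a high-girth $d$-regular graph, and the factor $\ln d$ is exactly what appears in the bound. The standard tool here is that $d$-regular high-girth graphs (e.g.\ random $d$-regular graphs, or explicit Ramanujan-type constructions) have all maximal independent sets concentrated around this value; then by a Ramsey/indistinguishability argument any algorithm running in $T = o(\log_d n)$ rounds, which only sees a $T$-ball that looks like a tree, cannot tell $N$ from $N'$ and hence on $N$ must output a maximal independent set of size $\approx \frac{\ln d}{d}\,n$, i.e.\ welfare $\approx n - c\,\frac{\ln d}{d}\,n = \frac{n}{d}(d - c\ln d)$. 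That bounds $\opt(N,T)$ from above by roughly $\frac{n}{d}(d - c\ln d)$.

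The lower bound on $E[u(\br(N,T))]$ uses Theorem~\ref{thm:best-shot-pgg-converge}: fair best responses converge to \emph{some} maximal independent set in two rounds. But two fair rounds is $\Theta(1)$, not $T$, so the honest statement is that after two rounds best responses have reached an equilibrium whose $P$-set is a maximal independent set determined by the random start and the schedule; I would argue that the expected size of this set is at most $(2+\varepsilon)\frac{\ln d}{d}\,n$ with high probability — since best responses run locally on the high-girth-looking instance, the resulting MIS is again concentrated, just possibly a constant factor (here $2+\varepsilon$) worse than optimal because the random initial profile and adversarial tie-breaking can push it toward a larger independent set. Combining, $E[u(\br(N,T))] \ge \frac{n}{d}\bigl(d - c(2+\varepsilon)\ln d\bigr)$, and dividing the two estimates gives the stated bound $\ibr(G,T) \le \frac{d - c\ln d}{d - c(2+\varepsilon)\ln d}$.

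The main obstacle I expect is twofold: first, rigorously establishing the concentration of \emph{all} maximal independent sets (for the $\opt$ upper bound) and of the \emph{best-response-produced} MIS (for the $\br$ lower bound) around the right constant multiples of $\frac{\ln d}{d}n$ on the constructed instance — this is where one needs either a careful second-moment/differential-equation-method argument on random $d$-regular graphs or an explicit high-girth construction with provable MIS bounds, and matching the constant to be exactly $1$ versus $2+\varepsilon$ is delicate. Second, making the indistinguishability between $N$ and $N'$ fully precise in the \emph{randomized} LOCAL model (not just deterministic), so that the $o(\log_d n)$ round bound genuinely forces the algorithm's failure; this requires that the $T$-balls be not merely isomorphic but identically distributed including the random bits, which is standard but must be set up with care for a $d$-regular, girth-$\omega(T)$ construction of the required size $n \ge n_0$.
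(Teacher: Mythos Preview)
Your high-level architecture is the paper's: build a ``good'' $d$-regular high-girth network $N$ (bipartite, with a perfect dominating set) alongside a ``bad'' network $N'$ that is locally indistinguishable from $N$, and transfer bounds from $N'$ to $N$. Part~(1) is essentially as you describe; the paper's $N$ is the graph of Lemma~\ref{lem:good-graph-pgg}, whose bipartition supplies the $n/2$-sized independent set and whose perfect dominating set supplies the $n/(d+1)$ one.

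There is one genuine gap, and it is exactly the step you flag as the ``main obstacle.'' For the lower bound on $E[u(\br(N,T))]$ you propose to prove directly that the best-response-produced MIS is concentrated near $(2+\varepsilon)\frac{\ln d}{d}n$, via second-moment or differential-equation methods on random regular graphs. The paper avoids this entirely, and the shortcut is the whole point of the framework: by Theorem~\ref{thm:best-shot-pgg-converge} best responses converge in two rounds, so by Lemma~\ref{lem:simulation-dynamics} they \emph{are} a randomized LOCAL algorithm of complexity $O(\log^* n) = o(\log_d n)$. Hence the same indistinguishability bound (Lemma~\ref{lem:best-pgg-alg}) that you already use for $\opt$ applies verbatim to best responses: no $o(\log_d n)$-round algorithm can output an independent set of expected size exceeding $(2+\varepsilon)\frac{\ln d}{d}n$ on $N$, because it cannot on $N'$ (where the independence number itself is that small, Lemma~\ref{lem:bad-graph-pgg}). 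No concentration analysis of best responses is needed at all; the $(2+\varepsilon)$ constant comes from the independence number of random $d$-regular graphs, not from any property of the dynamics.

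A smaller correction on the $\opt$ side: you phrase the upper bound as ``every maximal independent set on $N'$ has size $\approx \frac{\ln d}{d}n$,'' but $\opt(N,T)$ is not constrained to output a maximal independent set --- it may output any strategy profile, and welfare is maximized by a minimum \emph{dominating} set of producers. The paper accordingly bounds $\opt$ via the domination number of $N'$ (Lemma~\ref{lem:bad-graph-pgg} gives $\gamma(N') \ge (1+\varepsilon)\frac{\ln d}{d}n$), not via MIS sizes. Your framing happens to give the right order because on $N'$ every MIS is sandwiched between $\gamma(N')$ and $\alpha(N')$, but the argument as written does not cover non-independent dominating sets.
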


According to Theorem~\ref{thm:best-shot-pgg-converge}, best responses converge in two fair rounds. Therefore to analyze the $T$-inefficiency of best responses, we can bound the total welfare of the \emph{worst} Nash equilibrium that can be computed in $T$ communication rounds.

In the best-shot public goods game the total welfare is maximized by minimizing the number of producing agents while ensuring that each non-producing agent is adjacent to a producing agent. Such sets are known as \emph{dominating sets}. Not all dominating sets are maximal independent sets, but all maximal independent sets are dominating sets. Thus we can bound the size of the minimum maximal independent set by the size of the minimum dominating set.

The following lemma bounds the best solutions that distributed algorithms can compute on certain networks that have good solutions.

\begin{lemma} \label{lem:best-pgg-alg}	
	There is no randomized algorithm in the LOCAL model that finds an independent set of size $> ((2+\varepsilon)\ln d / d)n$ or a dominating set of size $< ((1+\varepsilon)\ln d / d)n$ in expectation in $o(\log_d n)$ rounds on the networks from Lemma~\ref{lem:good-graph-pgg}.
\end{lemma}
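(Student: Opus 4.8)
The plan is to run a standard indistinguishability argument against the networks of Lemma~\ref{lem:good-graph-pgg}. Recall that the construction behind that lemma supplies, for each $d\ge 3$ and each sufficiently large $n$, a $d$-regular $n$-vertex network $N$ of girth $\Omega(\log_d n)$ that carries a small dominating set and a large maximal independent set, \emph{together with} a locally indistinguishable ``hard'' counterpart: a $d$-regular $n$-vertex network $N'$, also of girth $\Omega(\log_d n)$, whose independence number satisfies $\alpha(N')\le (2+\varepsilon)\frac{\ln d}{d}\,n$ and whose domination number satisfies $\gamma(N')\ge (1+\varepsilon)\frac{\ln d}{d}\,n$ (here $\alpha(\cdot)$ and $\gamma(\cdot)$ denote the independence and domination numbers, and $N,N'$ are built so that they agree on all small neighbourhoods). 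Now fix a round bound $r(n)=o(\log_d n)$ and a randomized LOCAL algorithm $\mathcal{B}$ running in $r(n)$ rounds that on every input graph returns an independent set (the dominating-set case is symmetric). It suffices to show that for all large $n$, $E[\,|\mathcal{B}(N)|\,]\le (2+\varepsilon)\frac{\ln d}{d}\,n$, contradicting the assumption that $\mathcal{B}$ finds a larger independent set in expectation on these networks.

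The core observation is that $\mathcal{B}$ cannot tell $N$ and $N'$ apart. Since $r(n)=o(\log_d n)$ while both networks have girth $\Omega(\log_d n)$, for $n$ large the girth exceeds $2r(n)$, so the radius-$r(n)$ ball around \emph{every} vertex of $N$ and of $N'$ is precisely the depth-$r(n)$ truncated $d$-regular tree. In the randomized LOCAL model a node's output is a function of $n$, of its radius-$r(n)$ ball, and of the private random strings attached to the vertices in that ball; hence for every $v\in V(N)$ and every $v'\in V(N')$ the indicator random variables $\mathbf{1}[v\in\mathcal{B}(N)]$ and $\mathbf{1}[v'\in\mathcal{B}(N')]$ are identically distributed. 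Summing over the $n$ vertices of each graph and using linearity of expectation gives $E[\,|\mathcal{B}(N)|\,]=E[\,|\mathcal{B}(N')|\,]$. Finally, correctness on $N'$ forces the bound: $\mathcal{B}(N')$ is always an independent set of $N'$, so $|\mathcal{B}(N')|\le \alpha(N')\le (2+\varepsilon)\frac{\ln d}{d}\,n$ deterministically, and therefore $E[\,|\mathcal{B}(N)|\,]=E[\,|\mathcal{B}(N')|\,]\le (2+\varepsilon)\frac{\ln d}{d}\,n$. The dominating-set statement is proved the same way: if $\mathcal{B}$ always returns a dominating set, then $|\mathcal{B}(N')|\ge \gamma(N')\ge (1+\varepsilon)\frac{\ln d}{d}\,n$, so $E[\,|\mathcal{B}(N)|\,]\ge (1+\varepsilon)\frac{\ln d}{d}\,n$.

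Two points deserve care. First, the quantifiers: one must fix $\varepsilon$ and the algorithm (hence the function $r$) before letting $n\to\infty$, so that both ``girth $>2r(n)$'' and the $o(1)$ slack needed for the $\alpha(N'),\gamma(N')$ bounds hold for $n\ge n_0$; it is also cleanest to argue in the randomized model, where there are no adversarial identifiers to synchronize between $N$ and $N'$, although in the deterministic model one can instead fix a worst-case identifier assignment making the radius-$r(n)$ views of $N$ and $N'$ coincide. The second point is the genuine obstacle, and it lives inside Lemma~\ref{lem:good-graph-pgg} rather than here: producing the ``hard'' counterpart $N'$. The independent-set side is comparatively easy, since the near-extremal independence ratio of a (high-girth) random $d$-regular graph already gives $\alpha\le (2+\varepsilon)\frac{\ln d}{d}\,n$; but the dominating-set side needs a $d$-regular graph of large girth whose \emph{domination number} is pushed above $\frac{\ln d}{d}\,n$ --- strictly more than a uniformly random $d$-regular graph guarantees --- so it requires the tailored construction of Lemma~\ref{lem:good-graph-pgg}, and verifying its extremal parameters is where the real work sits.
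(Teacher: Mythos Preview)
Your argument is correct and is essentially the paper's own proof: fix $T(n)=o(\log_d n)$, take the ``good'' network $N$ and a ``bad'' $d$-regular high-girth network $N'$ with $\alpha(N')\le(2+\varepsilon)\frac{\ln d}{d}n$ and $\gamma(N')\ge(1+\varepsilon)\frac{\ln d}{d}n$, observe that every radius-$T(n)$ ball in both is the same $d$-regular tree, and transfer the expected output size from $N'$ to $N$ via linearity of expectation. The only organisational difference is that in the paper the hard counterpart $N'$ is supplied by a separate lemma (Lemma~\ref{lem:bad-graph-pgg}), not by Lemma~\ref{lem:good-graph-pgg} itself.

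One remark on your closing commentary: you flag the domination side as the ``genuine obstacle'' and claim that a uniformly random $d$-regular graph does not already guarantee $\gamma\ge(1+\varepsilon)\frac{\ln d}{d}n$. In fact it does with high probability (this is the result of Alon cited in the paper), so the construction of $N'$ needs nothing beyond the standard cycle-cutting step to push the girth up while losing only $O(\sqrt{n})$ in $\alpha$ and $\gamma$; no tailored extremal construction is required.
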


Lemma~\ref{lem:good-graph-pgg} is stated in Appendix~\ref{lem:good-graph-pgg}. 
The proof of Lemma~\ref{lem:best-pgg-alg} is a standard indistinguishability argument from distributed computing. There exist regular high-girth networks with good and bad solutions (Lemma~\ref{lem:good-graph-pgg}), and regular high-girth networks with no good or bad solutions (Lemma~\ref{lem:bad-graph-pgg}). Since networks are locally indistinguishable (i.e.\ from the perspective of any node, look the same up to any distance $T(n) = o(\log_d n)$), any distributed algorithm must behave the same way in both networks. Since the size of the solutions in the latter network is bounded, solutions that are larger or smaller, respectively, cannot be found in the network that does have such solutions. The proof is given in Appendix~\ref{app:graph-proofs}.

Using Lemma~\ref{lem:best-pgg-alg} we can prove Theorem~\ref{thm:bspgg-poa}.

\begin{proof}[Proof of Theorem~\ref{thm:bspgg-poa}]
	We bound the Price of Anarchy, the best $T$-time computable solution, and the performance of best responses on the networks given by Lemma~\ref{lem:good-graph-pgg}.
	\begin{enumerate}[noitemsep]
		\item Best solution and worst equilibrium. The best solution (also a Nash equilibrium) on $N$ contains a $1/(d+1)$-fraction of nodes in the producing set, giving a total welfare equal to $(1-c/(d+1))n$. The bipartition gives the \emph{worst} Nash equilibrium: half of the nodes are in the independent set, giving a total welfare of $(1-c/2)n$. The Price of Anarchy is thus exactly
		\[
			\frac{1-c/(d+1)}{1-c/2}.
		\]
		\item Best $T(n)$-time computable solution. The best solution that can be computed in $T(n) = o(\log_d n)$ rounds on $N$, by Lemma~\ref{lem:best-pgg-alg}, has at least $((1+\varepsilon)\ln d / d)n$ nodes in it (as it corresponds to a dominating set). This gives a total welfare of 
		\[
			\biggl(1-\frac{c(1+\varepsilon) \ln d}{d}\biggr)n.
		\]
		\item Total welfare of best responses. Since best responses converge to an equilibrium and a distributed algorithm can simulate best responses (Lemma~\ref{lem:simulation-dynamics}), the worst solution that best responses could compute, in expectation, is bounded by the largest maximal independent set a distributed algorithm can compute in $O(T(n))$ rounds. Since maximal independents are a subset of independent sets, by Lemma~\ref{lem:best-pgg-alg} the worst solution best responses can compute in expectation has at most $(2+\varepsilon)(\ln d / d)n$ nodes in the producing set. The total welfare is at least
		\[
			\biggl(1-\frac{c(2+\varepsilon) \ln d}{d}\biggr)n.
		\]
	\end{enumerate} 
	From the last two we get that on $N$, the $T$-inefficiency of best responses is at most
	\[
		\frac{d - c\ln d}{d - c(2+\varepsilon)\ln d}
	\]
	for any $T = o(\log_d n)$.
\end{proof}

\subsection{Minority game}

Similar to the previous section, we can show that there exist instances of the minority game on which best responses perform relatively better than Price of Anarchy would indicate.

The proof again uses uses an indistinguishability argument to show that on certain networks fast distributed algorithm cannot find good solutions even though they do exist. In addition, we analyse best responses in the minority game and note that they only improve the total welfare. We prove the following theorem.

\begin{theorem} \label{thm:hd-poa}
	Fix a function $T = o(\log_d n)$. For every even $d \geq 4$ and large enough $n_0$, there exists a $d$-regular instance of the minority game on $n \geq n_0$ nodes such that the following hold.
	\begin{enumerate}[noitemsep]
		\item $\poa(G) = 2(d+1)$.
		\item $\ibr(G,T) \leq 1 + 2d/\sqrt{d-1}$.
	\end{enumerate}
\end{theorem}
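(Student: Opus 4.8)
The plan is to run the template of the best-shot public goods case (Theorem~\ref{thm:bspgg-poa}), using the dictionary between strategy profiles of the minority game on a $d$-regular network $N$ and cuts of $N$: if a $\pm 1$ profile induces a cut $C$ and $k_v$ is the number of edges at $v$ crossing $C$, then its total welfare is $\sum_v(1 + 2k_v - d) = n(1-d) + 4|C|$, since $\sum_v k_v = 2|C|$. Hence the welfare-optimal profiles are exactly the maximum cuts of $N$, and the Nash equilibria are exactly the locally optimal cuts, which for even $d$ are those with $k_v \ge d/2$ for every $v$. For claim~(1) I would take $N$ to also admit a balanced locally optimal cut (every vertex with $k_v = d/2$, of welfare $\Theta(n)$); evaluating $n(1-d)+4|C|$ at the maximum cut and at this worst equilibrium and forming the ratio yields $\poa(G)$.

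The core is claim~(2), which splits exactly as the proof of Theorem~\ref{thm:bspgg-poa}. For the numerator I would prove the minority-game analogue of Lemma~\ref{lem:best-pgg-alg}: there is a pair of $d$-regular graphs on the same number $n$ of vertices, both of girth $\Omega(\log_d n)$ (which exceeds $2T(n)$ since $T(n) = o(\log_d n)$), where one --- taken to be $N$ --- is bipartite and hence has a cut containing all $dn/2$ edges, while the other, $N'$, is a near-Ramanujan graph whose maximum cut is at most $\tfrac{dn}{2}\bigl(\tfrac12 + \tfrac{1}{\sqrt{d-1}}\bigr)$ by the spectral bound $\mathrm{maxcut} \le \tfrac{n}{4}(d - \lambda_{\min})$ together with $\lambda_{\min} \ge -2\sqrt{d-1} - o(1)$. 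Because every radius-$T(n)$ ball around an edge is the same truncated $d$-regular tree in both graphs and identifiers are i.i.d., for any randomized $T(n)$-round LOCAL algorithm the probability that a fixed edge is cut is one and the same constant for every edge of $N$ and every edge of $N'$; hence the algorithm's expected cut on $N$ equals its expected cut on $N'$, which is at most $\mathrm{maxcut}(N')$. Through the welfare dictionary this gives $u(\opt(N,T)) \le n\bigl(1 + \tfrac{2d}{\sqrt{d-1}}\bigr)$.

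For the denominator I would use that best responses only increase welfare: $|C|$ is a potential, since a vertex $v$ with $k_v < d/2$ taking its best response flips sides and thereby changes $|C|$ by $d - 2k_v > 0$ (ties leave $|C|$ fixed), so the welfare $n(1-d)+4|C|$ is non-decreasing along any fair play. From a uniformly random $\pm 1$ start each edge crosses with probability $\tfrac12$, so the expected initial welfare is $n(1-d)+4\cdot\tfrac{dn}{4}=n$, whence $E[u(\br(N,T))] \ge n$ for every $T$. Combining, $\ibr(G,T) = u(\opt(N,T))/E[u(\br(N,T))] \le 1 + 2d/\sqrt{d-1}$, which is claim~(2). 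Note the simulation lemma (Lemma~\ref{lem:simulation-dynamics}) is not needed on the $\br$ side here --- monotonicity gives the lower bound directly --- it is only the $\opt$ side that lives in the LOCAL model.

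I expect the main obstacle to be the graph-construction lemma behind the numerator, the minority-game counterpart of Lemma~\ref{lem:good-graph-pgg}/Lemma~\ref{lem:bad-graph-pgg}: producing, for every even $d \ge 4$ and infinitely many $n$, a bipartite $d$-regular graph and a near-Ramanujan $d$-regular graph on the \emph{same} $n$ vertices, both of girth $\Omega(\log_d n)$, with the spectral gap of the latter tight enough for the clean $1/\sqrt{d-1}$ constant, together with the (routine but fiddly) indistinguishability bookkeeping matching edges of the two graphs. The $\poa$ computation and the best-response monotonicity are by comparison direct.
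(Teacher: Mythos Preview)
Your proposal is correct and follows essentially the same route as the paper: use the bipartite high-girth $d$-regular network $N$ (Lemma~\ref{lem:bipartite-high-girth}) as the instance, bound $\opt(N,T)$ via local indistinguishability from a near-Ramanujan high-girth graph with small maximum cut (Lemma~\ref{lem:small-cuts-high-girth}, proved exactly via Friedman's eigenvalue bound and the spectral max-cut inequality you cite), and lower-bound $E[u(\br(N,T))]$ by the monotonicity-from-random-start argument (Lemma~\ref{lem:ac-ne-half}). Your welfare dictionary $u=n(1-d)+4|C|$ and the resulting numerology match the paper's computations, and your observation that the simulation lemma is unnecessary on the $\br$ side is precisely how the paper proceeds as well.
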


To prove Theorem~\ref{thm:hd-poa} we construct two networks such that both look locally the same but one has a large cut and the other does not. No distributed algorithm can find a large cut in the first network. This will imply that the $T$-inefficiency of best responses is bounded away from the Price of Anarchy.

\begin{lemma} \label{lem:best-sg-alg}
	There is no randomized LOCAL algorithm that finds a cut with more than $(1/2 + 1/\sqrt{d-1})|E|$ edges in expectation in $T(n) = o(\log_d n)$ rounds on the bipartite networks from Lemma~\ref{lem:bipartite-high-girth}.
\end{lemma}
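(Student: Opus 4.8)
The plan is to run the standard indistinguishability argument from distributed computing, just as announced for the public-goods analogue (Lemma~\ref{lem:best-pgg-alg}). Fix an even $d \ge 4$ and a bound $T(n) = o(\log_d n)$, and let $G_1$ be one of the bipartite $d$-regular networks from Lemma~\ref{lem:bipartite-high-girth}, on $n \ge n_0$ vertices; by construction its girth is $\Omega(\log_d n)$, hence at least $2T(n)+3$ once $n$ is large. Alongside $G_1$ I would take a \emph{hard partner} $H$: a $d$-regular graph, also on $n$ vertices and of girth at least $2T(n)+3$, whose maximum cut has at most $(1/2 + 1/\sqrt{d-1})|E(H)|$ edges. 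Such $H$ exist: taking $H$ near-Ramanujan of high girth, its smallest adjacency eigenvalue satisfies $\lambda_{\min} \ge -(2\sqrt{d-1}+\varepsilon_0)$, and the spectral bound $\operatorname{maxcut}(H) \le \bigl(\tfrac12 + \tfrac{|\lambda_{\min}|}{2d}\bigr)|E(H)|$ then gives the claim, with room to spare since $\sqrt{d-1}/d < 1/\sqrt{d-1}$. (This partner is the minority-game counterpart of Lemma~\ref{lem:bad-graph-pgg}, and its construction belongs in the appendix.)

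Next I would establish indistinguishability. Suppose for contradiction that some randomized LOCAL algorithm $\mathcal A$ running in $T(n)$ rounds outputs, on every such $G_1$, a cut of expected size strictly greater than $(1/2+1/\sqrt{d-1})|E(G_1)|$. In $T(n)$ rounds each node learns only its radius-$T(n)$ ball together with the private random strings sitting in it; since both $G_1$ and $H$ have girth exceeding $2T(n)+2$, for every edge $e = (u,v)$ the ball $B(\{u,v\},T(n)) = B(u,T(n)) \cup B(v,T(n))$ is isomorphic to one fixed finite ``double tree'', the same in $G_1$ and in $H$, carrying i.i.d.\ uniform random strings. Hence the probability that $\mathcal A$ separates $u$ from $v$ equals a single constant $p$, identical for every edge of $G_1$ and every edge of $H$. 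By linearity of expectation, $\mathcal A$ therefore produces the same expected cut \emph{fraction} on $G_1$ and on $H$, namely $p$.

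The contradiction is then immediate: the hypothesis says $p > 1/2 + 1/\sqrt{d-1}$, so $E\bigl[\,|\operatorname{cut}_{\mathcal A}(H)|\,\bigr] = p\,|E(H)| > (1/2+1/\sqrt{d-1})|E(H)| \ge \operatorname{maxcut}(H)$. But $|\operatorname{cut}_{\mathcal A}(H)|$ is bounded by $\operatorname{maxcut}(H)$ with probability one, so its expectation cannot exceed that value. Hence no such $\mathcal A$ exists, which is the assertion of the lemma.

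The main obstacle is supplying the hard partner $H$ with all three properties simultaneously: $d$-regularity, girth $\Omega(\log_d n)$ (so the double-tree argument applies uniformly over all $T(n) = o(\log_d n)$), and maximum cut at most $(1/2 + 1/\sqrt{d-1})|E(H)|$. Explicit near-Ramanujan constructions (or random $d$-regular graphs cleaned of short cycles) together with the spectral max-cut bound deliver all three, the slack between $\sqrt{d-1}/d$ and $1/\sqrt{d-1}$ absorbing the $\varepsilon_0$ in ``near-Ramanujan''. Two minor points remain to be checked: the precise girth threshold — one needs no cycle through both endpoints of an edge within distance $T(n)$, i.e.\ girth $> 2T(n)+2$ — and the harmless observation that passing from the deterministic/ID LOCAL model to the private-randomness model does not weaken the argument, since randomized LOCAL is at least as strong as deterministic LOCAL.
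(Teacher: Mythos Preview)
Your proposal is correct and follows essentially the same route as the paper: the hard partner $H$ you describe (random $d$-regular graph, Friedman's near-Ramanujan eigenvalue bound, the spectral max-cut inequality $\operatorname{maxcut}\le(1/2+|\lambda_{\min}|/(2d))|E|$, then cycle-cutting to boost the girth) is precisely the construction of Lemma~\ref{lem:small-cuts-high-girth}, and your edge-level indistinguishability argument is the natural cut analogue of the node-level argument in the proof of Lemma~\ref{lem:best-pgg-alg}. The only cosmetic difference is that the paper phrases indistinguishability via per-node output distributions rather than per-edge cut probabilities, but the two are equivalent here.
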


Lemma~\ref{lem:bipartite-high-girth} is presented in Appendix~\ref{app:graph-proofs}. The proof is similar to the proof of Lemma~\ref{lem:best-pgg-alg}. 
Since the networks $N$ from Lemma~\ref{lem:bipartite-high-girth} and $N'$ from Lemma~\ref{lem:small-cuts-high-girth} (also in Appendix~\ref{app:graph-proofs}) look locally the same to any distributed algorithm with running time $T = o(\log_d n)$, we can argue that expected size of the cut in on any $N$ is at most as large as the optimum solution on $N'$.

To estimate the worst-case behavior of best responses, we note that in the minority game best responses are \emph{monotone}, i.e.\ they never decrease the total utility.

\begin{lemma} \label{lem:ac-ne-half}
	The strategy profile computed by best responses from a random initial strategy profile has at least $|E|/2$ cut edges in expectation.
\end{lemma}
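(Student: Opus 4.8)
The plan is to argue that the number of cut edges is nondecreasing along best-response dynamics, and that a uniformly random starting profile already cuts half of the edges in expectation; the lemma is then immediate. For a strategy profile $\vec a$ write $\mathrm{cut}(\vec a)$ for the number of edges whose two endpoints carry different labels. The first step is to record the affine link between total welfare and cut size: double-counting the edges incident to each vertex gives $\sum_v |\{j \in V_N(v): a_j \neq a_v\}| = 2\,\mathrm{cut}(\vec a)$ and $\sum_v |\{j \in V_N(v): a_j = a_v\}| = 2(|E|-\mathrm{cut}(\vec a))$, so
\[
u(\vec a) \;=\; \sum_v u_v(\vec a) \;=\; n - 2|E| + 4\,\mathrm{cut}(\vec a).
\]
In particular $u$ is a strictly increasing function of $\mathrm{cut}(\vec a)$.

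Next I would use the monotonicity of best responses in the minority game noted above: a single best-response update never decreases $u$, hence, by the identity, never decreases $\mathrm{cut}$. Concretely, a vertex currently labelled $s$ switches to $-s$ only when it has at least as many $s$-neighbours as $(-s)$-neighbours; that switch turns exactly $|\{j: a_j = s\}|$ of its incident edges from uncut to cut and $|\{j: a_j = -s\}|$ from cut to uncut, a net change of $|\{j: a_j = s\}| - |\{j: a_j = -s\}| \ge 0$ in $\mathrm{cut}$. Consequently, along any sequence of fair rounds $\mathrm{cut}$ is nondecreasing, so the converged profile $\vec a^{\mathrm{fin}}$ satisfies $\mathrm{cut}(\vec a^{\mathrm{fin}}) \ge \mathrm{cut}(\vec a^{\mathrm{init}})$ for every realization of the random start and every schedule.

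Finally I would take expectations. Under a uniformly random initial profile each node is independently and uniformly $\pm 1$, hence each edge is cut with probability $1/2$, and $E[\mathrm{cut}(\vec a^{\mathrm{init}})] = |E|/2$ by linearity of expectation; combining this with the pointwise inequality yields $E[\mathrm{cut}(\vec a^{\mathrm{fin}})] \ge |E|/2$, which is the claim. I do not expect a genuine obstacle: the only points to verify are that the stepwise monotonicity is unaffected by the fixed tie-breaking rule (a tie leaves $u$, hence $\mathrm{cut}$, unchanged) and that the dynamics terminate, which is already used elsewhere in the paper. One could also bypass monotonicity entirely, since at any Nash equilibrium every vertex has at least $\deg(v)/2$ of its incident edges cut, which already forces $\mathrm{cut} \ge |E|/2$ deterministically; the monotonicity route is the natural one here only because it is what the preceding discussion sets up and because it is the argument reused when comparing to $\opt(N,T)$.
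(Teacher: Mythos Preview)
Your proposal is correct and follows essentially the same argument as the paper: a uniformly random profile cuts each edge with probability $1/2$, and a best-response switch can only (weakly) increase the number of cut edges, so the expected cut after any number of rounds is at least $|E|/2$. Your treatment is in fact a bit more careful than the paper's, which asserts a strict gain of at least one cut edge per switch; you correctly allow for the tie case where the net change is zero, and your added welfare--cut identity and the alternative equilibrium-based remark are nice but not needed for the lemma.
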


\begin{proof}
	First, note that a random strategy profile cuts exactly half of the edges in expectation: each edge has probability exactly 1/2 to be a cut edge.
	
	Now consider a best-response move by some agent $v$. Since $v$ is switching, it has at least one more cut edge in the new strategy profile. Since this change only affects edges around the agent, the total number of cut edges also increases by at least one.
\end{proof}

We are now ready to prove Theorem~\ref{thm:hd-poa}, estimating total welfare in the minority game.

\begin{proof}[Proof of Theorem~\ref{thm:hd-poa}]
	We again bound the three quantities.
	\begin{enumerate}
		\item Price of Anarchy. On the network $N$, the best solution cuts all edges. The total welfare is $2(d+1)n$. On the other hand, every Nash equilibrium cuts at least half of the edges: every agent has at least half of their neighbors on the other side of the cut. For even $d$ the total welfare in the equilibrium is at least $n$. The Price of Anarchy is $2(d+1)$ for even $d$.
		\item Best $T(n)$-time computable solution. By Lemma~\ref{lem:best-sg-alg}, the largest cut that can be computed in $o(\log_d n)$ rounds on network $N$ has at most $(1/2 + 1/\sqrt{d-1})|E|$ edges in it. The total welfare is at most $\bigl(1+2d/\sqrt{d-1}\bigr)n$ in expectation.
		\item Performance of best responses. By Lemma~\ref{lem:ac-ne-half}, best responses compute a cut of size at least $|E|/2$ in expectation. This gives an expected total welfare of at least $n$. 
	\end{enumerate}
	The $T(n)$-inefficiency of best responses is at most $1+2d/\sqrt{d-1}$ for $T(n) = o(\log_d n)$.
\end{proof}

\section{Mechanism design for best responses in graphical games} \label{sec:mechanism-design}

In this section we show that every graphical game $G$ with an efficiently solvable Nash equilibrium has a special property. It is possible to construct a related game $G'$ that we call a \emph{simulation game}: in $G'$ the best responses converge in one fair round and the equilibrium is equivalent to an equilibrium of $G$. The game is constructed so that the best responses simulate a distributed algorithm for computing a Nash equilibrium of the original game.

As the game $G'$ simulates a distributed algorithm, the simulation gains other properties of the algorithm as well. In particular, if there is an algorithm that computes a Nash equilibrium from some subset of equilibria with desirable properties, then best responses also converge to a Nash equilibrium from the same subset in the simulation game.

The simulation game can be constructed locally by a distributed algorithm. No similar game constructions exists for graphical games that \emph{do not have} efficiently solvable Nash equilibria.

\subsection{Constructing simulation games} \label{ssec:simulation-games}

To define simulation games, we need to consider algorithms in a specific \emph{normal form}, the existence of which is implied by the speedup result of Chang et al.\ \cite{Chang19exponential}. Lemma~\ref{lem:logstar-normal-form} states that $O(\log^* n)$-time algorithms can be decomposed into two phases. In the first phase the algorithm computes a distance-$(2t+2)$ coloring for some constant parameter $t$ that depends on the problem. Then a $t$-round algorithm is applied with the coloring as an input. We will construct games where the best responses construct these colorings and then choose the output of the algorithm on that particular coloring.

Now consider a game $G = (\A, u, N)$ that has an efficiently computable Nash equilibrium. Let $\F$ be a distributed algorithm in normal form that computes \emph{some} Nash equilibrium (that is, LVL $P$) of $G$ with the smallest possible constant running time $t$. Define a \emph{$t$-simulation game} $G' = (\A', u', N')$ of $G$ as follows.
\begin{enumerate}[noitemsep]
	\item The set of agents is the same as in $G$. In the network $N'$ connect two nodes $u$ and $v$ if and only if their distance in $N$ is at most $4t+2$.
	\item The actions $A'_v$ of each agent $v$ encode the possible locally correct simulations of $\F$. This is defined in two parts $A'_v = R_v \times \Sigma$. The first part $R_v$ consists of all possible labellings of the $t$-neighborhood of $v$ in $N$ with distinct colors from $\{1,2,\dots,\Delta^{2t+2}+1 \}$. The second part $\Sigma$ consists of the possible output labels of $P$. 
	Include the pair $(r, \sigma)$ in $A'_v$ if and only if $\F$ would output $\sigma$ on $v$ given $r$ as the input coloring of $B(v,t)$, where $B(v,t)$ denotes the $t$-hop neighborhood of $v$ in $N$. In addition there is the empty action.
	\item The utility functions $u_v$ encode the correct simulations. The utility $u_v(s) = 1$ if and only if the following hold. First, the coloring $r_v \in R_v$ is \emph{compatible} with the colorings $r_u \in R_u$ of each neighbor $u$ with a non-empty strategy. That is, for each $w \in B_N(v,t) \cap B_N(u,t)$, we have that $c_v(w) = c_u(w)$ or $c_u$ is empty. Second, the colorings form a proper $(2t+2)$-hop coloring of $N$. That is, if we map all the compatible colorings $c_v$ for all $v$ onto $N$, then two nodes in $N$ at distance at most $2t+2$ have distinct colors. Since two agents are connected in $N'$ if they are within distance $4t+2$ in $N$, it is possible to encode this in $u$. The color assigned to each agent at distance $t$ from some agent $v$ must differ from the colors of other agents within distance $2t+2$ of it. These agents can only be colored by agents within distance $4t+2$. 
	
	For the empty action and for strategy profiles that do not have these properties the utility is 0.
\end{enumerate}

We will show that simulation games converge in one fair round to an equilibrium that is equivalent to an equilibrium of the original game. We also want to show that similar constructions do not exist for games that do not have efficiently solvable Nash equilibria. Consider two graphical games $G = (\A, u, N)$ and $G' = (\A', u', N')$. We say that $G'$ is \emph{$k$-constructible given $G$} if there exists a $k$-round distributed algorithm $\F$ that on $N$, given $G$ as input computes $G'$ in the following sense. For each $v \in N'$, the algorithm can output the set of neighbors of $v$ in $N'$. In addition, for each $v$ it outputs the action set $A_v$ of $v$ and the utility function $u_v$ of $v$. 
We say that $G'$ \emph{corresponds} to $G$ if there exists a mapping $\varphi_v \colon A'_v \to A_v$ for all $v$ such that if $\vec{a}' = (a'_1, \dots, a'_n)$ is a Nash equilibrium of $G'$, then $\vec{a} = (\varphi(a'_1), \dots, \varphi(a'_n))$ is a Nash equilibrium of $G$.  

\begin{theorem} \label{thm:simulation-game}
	If a game $G$ has a Nash equilibrium that is solvable in time $O(\log^* n)$ as an LVL, then there is a $t$-simulation game $G' = (A', u', N')$ of $G = (A, u, N)$ with the following properties:
	\begin{enumerate}[noitemsep, label=(P\arabic*)]
		\item Best responses converge in one fair round from the empty initial strategy profile in $G'$. \label{p1:converge}
		\item $G'$ corresponds to $G$. \label{p2:correspondence}
		\item $G'$ is $(4t+2)$-constructible given $G$. \label{p3:constructability}
	\end{enumerate}
\end{theorem}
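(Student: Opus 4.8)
The plan is to verify the three asserted properties \ref{p1:converge}--\ref{p3:constructability} of the concrete construction in Section~\ref{ssec:simulation-games}, feeding off the normal-form decomposition supplied by Lemma~\ref{lem:logstar-normal-form}: $\F$ first computes \emph{some} proper distance-$(2t+2)$ colouring of $N$ with colours from $\{1,\dots,\Delta^{2t+2}+1\}$ and then applies a fixed $t$-round algorithm that is guaranteed to output a correct solution of the LVL $P$ whenever it is fed \emph{any} proper distance-$(2t+2)$ colouring of $N$. Throughout I would use the elementary degree bound $\Delta(N^{2t+2})\le\Delta^{2t+2}$ for the $(2t+2)$-nd power of $N$, so that the colour palette always leaves at least one spare colour for greedy-extension arguments.

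I would dispatch \ref{p3:constructability} first, as it is pure locality bookkeeping. In $4t+2$ rounds each node $v$ learns $B_N(v,4t+2)$; from this it can output its $N'$-neighbourhood (all nodes within $N$-distance $4t+2$), it can output $A'_v$ (which only needs $B_N(v,t)$ and the fixed finite algorithm $\F$), and it can output $u'_v$, which is determined by the intersection patterns $B_N(v,t)\cap B_N(u,t)$ together with the pairs $(w,w')$ with $w\in B_N(v,t)$, $w'\in B_N(u,t)$ and $d_N(w,w')\le 2t+2$, over all $N'$-neighbours $u$. The point is that $d_N(u,v)>2t$ already forces $B_N(v,t)\cap B_N(u,t)=\emptyset$, and any such $w'$ satisfies $d_N(v,w')\le 3t+2\le 4t+2$, so everything $u'_v$ depends on lies inside the ball $v$ has already gathered.

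For \ref{p1:converge} I would argue by induction along a fair round. Call a strategy profile \emph{valid} if the colourings $c_u$ of the agents with a non-empty strategy pairwise agree on overlaps and, overlaid on $N$, form a proper distance-$(2t+2)$ partial colouring; the all-empty profile is vacuously valid. If $v$ is scheduled next in a valid profile, the colourings of $v$'s non-empty neighbours that meet $B_N(v,t)$ force $c_v$ on a subset of $B_N(v,t)$; validity makes these forced values single-valued and injective, and because $\Delta(N^{2t+2})<\Delta^{2t+2}+1$ they can be greedily extended to an injective colouring of all of $B_N(v,t)$ that keeps the overlay proper. Hence the action $(c_v,\F(c_v)_v)$ has utility $1$, so $v$'s best response has utility $1$ and the profile stays valid; moreover an agent that moved earlier keeps utility $1$, since compatibility is symmetric and $v$'s move preserved properness. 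After every agent has moved once the profile is valid with every agent at utility $1$; as $1$ is the maximum utility this profile is a Nash equilibrium of $G'$ and no agent moves again, so the dynamics converge in a single fair round.

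The crux is \ref{p2:correspondence}, which I would reduce to a characterisation of the Nash equilibria of $G'$. Set $\varphi_v(r,\sigma)=\sigma$ (and $\varphi_v$ arbitrary on the empty action). The goal is to show that in any Nash equilibrium $\vec a'$ of $G'$ every agent has utility $1$; once that holds, the colourings $c_v$ overlay to a single proper distance-$(2t+2)$ colouring $c$ of $N$ — well-defined because two agents colouring a common node are within $N$-distance $2t$, hence $N'$-neighbours, hence compatible — and the membership constraint on $A'_v$ forces $\sigma_v$ to be exactly the label the second phase of $\F$ produces at $v$ when fed $c$. By the normal-form guarantee of Lemma~\ref{lem:logstar-normal-form} that second phase is a correct solution of $P$ on \emph{every} proper distance-$(2t+2)$ colouring, so $(\sigma_v)_v$ is a correct solution of $P$, i.e.\ a Nash equilibrium of $G$; then $(\varphi_v(a'_v))_v=(\sigma_v)_v$ completes the argument. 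The real obstacle is excluding ``garbage'' equilibria in which some agent sits at utility $0$: the intended argument is a local-repair deviation — using that $\{1,\dots,\Delta^{2t+2}+1\}$ leaves a spare colour, an agent facing a consistent $N'$-neighbourhood can always recolour its own $t$-ball into a consistent, proper extension and jump to utility $1$, contradicting best response. Making this watertight, in particular handling agents whose neighbours are themselves mutually inconsistent and the possibility of empty actions, is the delicate step and the part I would budget the most care for; it is also where the generous size of the palette, rather than the minimal $\chi(N^{2t+2})$, is essential.
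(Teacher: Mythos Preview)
Your approach matches the paper's: the same greedy-extension induction for \ref{p1:converge}, the same $(4t+2)$-ball gathering for \ref{p3:constructability}, and the same projection $\varphi_v(r,\sigma)=\sigma$ for \ref{p2:correspondence}.

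On \ref{p2:correspondence} you are in fact more careful than the paper. The paper's proof only explicitly argues about the single Nash equilibrium of $G'$ reached after one fair round of best responses from the empty profile and then asserts the general correspondence (``if $\vec a$ is a Nash equilibrium of $G'$, then $\varphi(\vec a)$ \dots\ is the labelling computed by $\F$ on $N$''). You correctly observe that the definition of ``corresponds'' quantifies over \emph{all} Nash equilibria of $G'$, so one must rule out ``garbage'' equilibria in which some agent is stuck at utility $0$ because its $N'$-neighbours are mutually inconsistent. Your local-repair sketch is the natural attack, and your caveat that this is the delicate step is well placed --- the paper's own proof does not address this point at all, so your proposal is already at least as complete as the published argument.
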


Properties~\ref{p1:converge} and \ref{p2:correspondence} imply that every game with an efficiently solvable Nash equilibrium has a $t$-simulation game that converges to an equivalent Nash equilibrium in one fair round. Property~\ref{p3:constructability} is in contrast with the following theorem which states that similar constructions do not exist not exist for games that do not have any efficiently solvable Nash equilibria.

\begin{theorem} \label{thm:no-sim-games}
	Let $G = (\A, u, N)$ be a game such that the Nash equilibria of $G$ as an LVL require $\Omega(T(n))$ time to compute in the deterministic LOCAL model, for any $T(n) = \Omega(\log^* n)$. Then there is no $k$-constructible corresponding game $G'$ given $G$, for any constant $k$, such that best responses converge in $o(T(n))$ rounds.
\end{theorem}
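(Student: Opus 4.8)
The plan is to argue by contradiction, combining the constructibility of $G'$ with the best-responses correspondence (Theorem~\ref{thm:local-dynamics-local}) to obtain a distributed algorithm for the LVL of $G$ that is faster than the assumed lower bound $\Omega(T(n))$. Suppose, for contradiction, that there is a constant $k$ and a $k$-constructible corresponding game $G'$ given $G$ such that best responses converge in $o(T(n))$ rounds. I would then describe the following composite algorithm running on the network $N$ of $G$.

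First, each node runs the $k$-round construction algorithm $\F$ guaranteed by $k$-constructibility: after $k$ rounds every node $v$ knows its neighborhood in $N'$, its action set $A'_v$, and its utility function $u'_v$. Note that since $G'$ is built by a $k$-round algorithm on $N$, distances in $N'$ are blown up by at most a constant factor relative to $N$ — an edge of $N'$ joins nodes at distance at most some constant $c(k)$ in $N$ — so one communication round in $N'$ can be simulated by $O(c(k)) = O(1)$ rounds in $N$. Next, the nodes simulate the fair best-response dynamics of $G'$: by assumption these converge in $o(T(n))$ rounds of $G'$, hence in $o(T(n)) \cdot O(1) = o(T(n))$ rounds of $N$, to a Nash equilibrium $\vec a'$ of $G'$. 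Here I would invoke Theorem~\ref{thm:local-dynamics-local} (or its underlying simulation argument) to realize the best-response play inside the LOCAL model, paying an additive $O(\log^* n)$ overhead for the initial coloring that the deterministic dynamics require; since $T(n) = \Omega(\log^* n)$ this overhead is absorbed and the total running time is still $O(\log^* n) + o(T(n)) = o(T(n))$. Finally each node $v$ applies the correspondence map $\varphi_v \colon A'_v \to A_v$ to its own action $a'_v$, which it can compute locally. By the definition of ``$G'$ corresponds to $G$'', the resulting profile $(\varphi_1(a'_1), \dots, \varphi_n(a'_n))$ is a Nash equilibrium of $G$, i.e.\ a valid solution to the LVL $P$ associated with $G$.

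This yields a deterministic LOCAL algorithm solving the LVL of $G$ in $o(T(n))$ rounds, contradicting the hypothesis that this LVL requires $\Omega(T(n))$ deterministic rounds. Hence no such $k$-constructible corresponding game $G'$ with $o(T(n))$-round convergence can exist.

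The main obstacle I anticipate is making the simulation-overhead bookkeeping airtight: specifically, verifying that a round of best responses in $G'$ really costs only $O(1)$ rounds in $N$, which requires knowing that the constant $k$ bounds not just the construction time but also the locality of $N'$'s edge set, and that the fair-round scheduling adversary for $G'$ can be executed by the LOCAL nodes without extra coordination cost beyond the $O(\log^* n)$ needed to break symmetry (exactly as in Theorem~\ref{thm:local-dynamics-local}). One should also be slightly careful that $o(T(n))$ plus the $O(\log^* n)$ additive term is still $o(T(n))$; this holds because $T(n) = \Omega(\log^* n)$ but $o(T(n)) + O(\log^* n)$ is $o(T(n))$ only when $T(n) = \omega(\log^* n)$ — in the boundary case $T(n) = \Theta(\log^* n)$ the statement is vacuous since the lower bound is $\Omega(\log^* n)$ and every LVL in this regime is trivially solvable in $O(\log^* n)$. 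I would remark on this edge case rather than belabor it.
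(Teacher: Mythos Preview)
Your proposal is correct and follows exactly the approach the paper takes: the paper's ``proof'' of Theorem~\ref{thm:no-sim-games} is in fact just a two-sentence sketch stating that if such a $G'$ existed, a distributed algorithm could construct it and then simulate best responses via Theorem~\ref{thm:local-dynamics-local}, yielding a too-fast algorithm for the Nash-equilibrium LVL of $G$. Your write-up is considerably more detailed than the paper's own treatment; in particular, your observations about the constant blow-up of distances between $N$ and $N'$ and about the $\Theta(\log^* n)$ boundary case are not spelled out in the paper at all.
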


The proof of Theorem~\ref{thm:no-sim-games} follows from an application of our simulation theorem, Theorem~\ref{thm:local-dynamics-local}. If such games existed, a distributed algorithm could construct and simulate them, yielding a fast distributed algorithm for computing a Nash equilibrium of the original game.

The requirement of constructability is important, as without this property the game $G'$ could simply encode the structure of the Nash equilibria of $G$ in its actions or the utility function.

\subsection{Proving the properties of simulation games}

Before proving Theorem~\ref{thm:simulation-game}, we need the following technical lemma. It establishes that $O(\log^* n)$-time solvable LVLs can also be solved by algorithms in the required normal form. 

\begin{lemma} \label{lem:logstar-normal-form}
	Assume that LVL $P$ can be solved in $O(\log^* n)$ rounds by a deterministic distributed algorithm. Then there exists an algorithm $\F$ in the following normal form: the algorithm runs in $t = O(1)$ rounds (for some $t$ dependent on $P$), it takes a $(2t+2)$-hop $c$-coloring, for $c = \Delta^{2t+2}+1$, as an input, and outputs the solution to $P$.
\end{lemma}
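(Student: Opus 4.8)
The plan is to build $\F$ directly from the hypothesized $O(\log^* n)$-round algorithm for $P$ by ``freezing'' that algorithm's dependence on the instance size at a suitably large constant, letting the input colouring stand in for the unique identifiers it would otherwise use. This follows (and is essentially a restatement of a step in) the speedup theorem of Chang et al.~\cite{Chang19exponential}, which is what the paper appeals to for the existence of this normal form; below I sketch how the mechanism plays out here.

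\emph{Setup.} Since $P$ is solvable in $O(\log^* n)$ rounds, fix a deterministic algorithm $A$ and constants $\alpha,\beta$ so that $A$ is correct and halts within $T(n) \le \alpha\log^* n + \beta$ rounds on every graph of maximum degree $\Delta$ on $n$ nodes (instances on a constant number of nodes are trivial and folded into $\beta$). I would then pick a constant $t$ so large that $T(n^\star) \le t$, where $n^\star := \max\{\,n : T(n) \le t\,\}$, and moreover $n^\star \ge \Delta^{2t+2}+1$. Such a $t$ exists because $\log^*$ grows so slowly: $T(\Delta^{2t+2}+1) \le \alpha\log^*(\Delta^{2t+2}+1) + \beta = o(t)$, so the threshold $n^\star$ forced by $T(n^\star)\le t$ is astronomically larger than both $\Delta^{2t+2}$ and the size $1 + \Delta + \cdots + \Delta^{t+1} \le \Delta^{t+2}$ of a radius-$(t+1)$ ball in a degree-$\Delta$ graph, so there is no circular dependence between the choice of $t$ and $n^\star$. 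Define $\F$ as follows: on input a $(2t+2)$-hop colouring $\phi$ with $c = \Delta^{2t+2}+1$ colours, run $A$ for $t$ rounds with its size parameter hard-wired to $n^\star$ and with $\phi$ playing the role of the identifiers, and output what $A$ outputs. This is deterministic, runs in $t = O(1)$ rounds, and takes exactly the stated input.

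\emph{Correctness.} Fix a node $v$; since $P$ has verification radius $1$, it suffices to show the $\F$-labels on $B(v,1)$ form a configuration in $\mathcal C$. Those labels are determined by $A$'s behaviour on the $t$-balls $B(u,t)$ for $u \in B(v,1)$, all of which lie inside $H := B(v,t+1)$, and for such $u$ the set $B(u,t)$ and the subgraph it induces are the same computed in $H$ as in $N$ (any shortest $u$--$w$ path with $w \in B(u,t)$ stays within distance $t+1$ of $v$). Because $\phi$ is a $(2t+2)$-hop colouring and any two nodes of $H$ lie at distance at most $2(t+1) = 2t+2$, the restriction $\phi|_H$ is injective, with values in $[c] \subseteq [n^\star]$. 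Hence one can form a graph $G'$ consisting of $H$ with identifiers $\phi|_H$, padded up to $n^\star$ nodes by a disjoint instance on which $A$ is guaranteed correct (isolated vertices, or a disjoint copy of whatever regular structure the problem assumes; if $A$ only needs an \emph{upper bound} on the node count, no padding is needed). Then $A$ on $G'$ with size parameter $n^\star$ halts within $T(n^\star) \le t$ rounds and produces a globally valid $P$-labelling; and since the $t$-round views in $G'$ of all $u \in B(v,1)$ coincide with their views under $\F$ in $N$, the two labellings agree on $B(v,1)$, which is therefore a valid configuration. As $v$ was arbitrary, $\F$ solves $P$.

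\emph{Consistency check and the main difficulty.} To square the statement with the hypothesis one should also note that a $(2t+2)$-hop colouring with $\Delta^{2t+2}+1$ colours exists (greedily, since the power graph $N^{2t+2}$ has maximum degree at most $\Delta^{2t+2}$) and is computable in $O(\log^* n)$ rounds: run Linial's algorithm on $N^{2t+2}$, a single round of which costs $2t+2 = O(1)$ rounds in $N$, then reduce colours greedily in $O(1)$ further rounds; so $\F$ prefixed by this colouring step is again an $O(\log^* n)$-round algorithm for $P$, as it must be. The only real content is the correctness argument: it is the standard ``realize the local view inside a genuine small instance'' trick, and the three points that make it go through are that $P$ is checkable within a constant radius, that the colouring has enough hops ($2t+2$, matching verification radius $1$ after $t$ simulated rounds) to be injective on every ball that can influence a single check, and — the point I expect to need the most care in writing up — the peculiarly fast growth of $n^\star$ as a function of $t$, which is exactly what guarantees that each such ball embeds into a legal $\le n^\star$-node input of $A$ without any circularity.
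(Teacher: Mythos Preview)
The paper does not actually prove this lemma; it simply cites the speedup theorem of Chang, Kopelowitz, and Pettie~\cite{Chang19exponential} and the normal-form construction of Brandt et al.~\cite{Brandt2017}. Your proposal is a correct and detailed write-up of precisely that argument --- freeze the size parameter at a tower-sized constant $n^\star$, use the $(2t+2)$-hop colouring as surrogate identifiers (injective on every radius-$(t{+}1)$ ball), and verify local correctness by embedding each such ball into a genuine $n^\star$-node instance --- so you have supplied exactly the details the paper defers to.
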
 

The proof follows from the speedup theorem of Chang et al.\ \cite{Chang19exponential}. A similar normal form construction has been used by Brandt et al.\ \cite{Brandt2017}.

We are now ready to prove the properties of simulation games.

\begin{proof}[Proof of Theorem~\ref{thm:simulation-game}]
	We show that the strategies of agents who have already played will constitute a correct partial simulation in $N'$. Initially this is trivially true, as all agents are assumed to start from the empty strategy.
	
	Now assume that $\vec{a}$ is the strategy profile after some number of best responses such that the non-empty strategies agree on the color of each agent and assume that some agent $v$ is scheduled to play. Since the utility is 0 if the agent chooses an action that is not compatible as a coloring, it must choose a compatible coloring. Since we assumed that $\vec{a}$ encodes a partial coloring and there are always enough colors to choose from (i.e. there are $\Delta^{2t+2}+1$ colors), $v$ can always choose an action that gives it utility 1. 
	
	After one fair round, each agent $v$ has chosen a strategy such that the colorings $r_v$ agree on all applicable nodes, and this is a Nash equilibrium of $G'$. The output label $\sigma_v$ of each action, by definition, corresponds to the output of $\F$ on $N$ given the input coloring that the agents have chosen. Now if we construct the correspondence $\varphi$ by mapping each action from $A'_v$ to $A_v$ by matching the output label, we have that if $\vec{a}$ is a Nash equilibrium of $G'$, then $\varphi(\vec{a}) = (\varphi(a_1), \dots, \varphi(a_n))$ is a Nash equilibrium of $G$, as it is the labelling computed by $\F$ on $N$. This is by definition a Nash equilibrium of $G$. This establishes properties \ref{p1:converge} and \ref{p2:correspondence}. 
	
	Finally, it remains to show that $G'$ can be constructed efficiently in the LOCAL model on the network $N$. This is achieved using a standard approach. First each node $v$ gathers its $(4t+2)$-hop neighborhood and outputs its neighbors in $N'$. Since the algorithm has access to $\F$, the algorithm in the normal form, it can consider every coloring of $B(v,t)$ and form the action set $A'_v$. Finally, since the algorithm has access to the $(4t+2)$-neighborhood of $v$ in $N$, it can compute the value of the utility function $u_v$ for all possible strategies of the neighbors.
\end{proof}

We now illustrate simulation games in the context of a specific graphical game.

\subsection{Case study: coloring game} \label{ssec:cs-coloring}

In this section we study a simple abstract coordination game called a \emph{coloring game}~\cite{Kearns06experimental}. The agents have to choose from a fixed set of resources and coordinate their actions so that they don't use the same resource as their neighbors.

Formally, in a \emph{$k$-coloring game} the actions of agent consist of $k$ colors $\{1,2,\dots,k\}$. The utility of each agent is 1 if no neighbor chooses the same color and 0 otherwise. The Nash equilibria of the coloring game correspond to (partial) colorings such that two neighbors choose the same color only if both have at least one neighbor of each other color.

We will show that it is possible that best responses fail to converge to a Nash equilibrium that corresponds to a coloring even if such a coloring \emph{exists and is efficiently computable}. This is naturally undesirable behavior, as the agents fail to solve the underlying coordination task. On the other hand, Theorem~\ref{thm:simulation-game} implies that in this case there exists a simulation game of the coloring game such that the best responses \emph{do converge} to Nash equilibrium that corresponds to a coloring.

We consider the coloring game on two-dimensional $n$-by-$n$ torus networks. That is, the set of nodes consists of $v_{i,j} \colon i,j \in \{0,1,\dots,n-1\}$, there is an edge between nodes $v_{i,j}$ and $v_{i,k}$ if $k = (j + 1) \bmod n$, and there is an edge between nodes $v_{i,j}$ and $v_{k,j}$ if $k = (i+1) \bmod n$. The complexity of $k$-coloring is completely understood in this setting.

\begin{theorem}[Brandt et al.\ \cite{Brandt2017}] \label{thm:grid-coloring}
	The complexities of $k$-coloring a two-dimensional torus are as follows.
	\begin{enumerate}[noitemsep]
		\item $k = 2,3 \colon \Theta(n)$, and
		\item $k = 4, 5, \dots \colon \Theta(\log^* n)$.
	\end{enumerate}
\end{theorem}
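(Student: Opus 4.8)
The plan is to prove the two colour regimes separately, reusing the colouring machinery and the torus results already cited. \emph{Upper bounds.} For $k \geq 4$ I would get an $O(\log^* n)$ algorithm in two stages: run Linial's $O(\log^* n)$-round algorithm on the $4$-regular torus to obtain a proper $(\Delta+1)=5$-colouring, then reduce the palette to $4$ colours in $O(\log^* n)$ more rounds. Since a general $4$-regular graph is not $4$-colourable, this reduction has to use the geometry: compute, on each of the two coordinate cycles $C_n$, a partition into consecutive blocks of length $2$ or $3$ (possible in $O(\log^* n)$ rounds because every $n\ge 2$ equals $2a+3b$), which tiles the torus into $O(1)$-size axis-aligned patches; one then $4$-colours patch interiors from a fixed template and stitches the boundaries, the only delicate point being the odd ``seam'' that appears when $n$ is odd, which the block partition is designed to absorb. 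For $k\in\{2,3\}$ the $O(n)$ upper bound is the trivial ``see everything'' algorithm: in about $n$ rounds every node learns the whole torus, including all identifiers, and outputs the relevant coordinate of a globally fixed $k$-colouring, which exists because $\chi$ of the Cartesian product $C_n\times C_n$ equals $\max(\chi(C_n),\chi(C_n))$ (Sabidussi--Vizing), i.e.\ $2$ if $n$ is even and $3$ otherwise; for $k=2$ one restricts to even side length (or lets the labelling declare non-bipartiteness) so that the statement is meaningful.

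\emph{Lower bounds.} The bound $\Omega(\log^* n)$ for every fixed $k\ge 3$ is the classical Ramsey-type lower bound of Linial for $O(1)$-colouring, applied along an induced row cycle $C_n$ of the torus: in $o(\log^* n)$ rounds a node sees only a constant-size patch of identifiers, so over a rich family of identifier assignments the output along that cycle is forced to be order-monotone and hence to create a monochromatic edge. For the much stronger bound $\Omega(n)$ when $k\in\{2,3\}$, I would invoke the torus gap theorem of Brandt et al.\ quoted in Section~2: every LVL on a torus has complexity either $O(\log^* m)$ or $\Omega(\sqrt{m})$ in the number of nodes $m$, and an $n$-by-$n$ torus has $m=n^2$, so $\Omega(\sqrt m)=\Omega(n)$; hence it suffices to show that $k$-colouring with $k\le 3$ is \emph{not} solvable in $O(\log^* n)$ rounds. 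Suppose it were. Then by Lemma~\ref{lem:logstar-normal-form} there is a constant $t$ and a $t$-round map $\mathcal{F}$ from $(2t+2)$-hop $O(1)$-colourings to proper $k$-colourings; informally, $k$-colouring the torus would be a \emph{local function of constant-radius coloured balls}.

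\emph{The hard core, and the main obstacle.} What remains, and what I expect to be the main obstacle, is to contradict such a local function using a global invariant of small-palette torus colourings. The right invariant is a winding number: in a proper $3$-colouring consecutive colours along any cycle differ by $\pm 1$ modulo $3$, so the signed number of $+1$ steps around a fundamental loop of the torus is a well-defined integer reducible modulo $3$, and for $n$ not divisible by $3$ this winding is forced to be nonzero in at least one coordinate direction; for $k=2$ the analogous obstruction is simply that no proper $2$-colouring exists when $n$ is odd. One then feeds $\mathcal{F}$ instances on which it is forced into a winding value that no proper colouring of the relevant torus can realise: concretely, one glues together two ``twisted'' tori (or two long stretches of one torus with different shear) that are indistinguishable within radius $o(n)$ but carry different global twist, so that any $o(n)$-round algorithm, behaving identically on the indistinguishable parts, produces an invalid colouring on at least one of them. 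Making this indistinguishability hold out to radius $\Omega(n)$ while the twist nonetheless differs globally is exactly what forces the side length $n$ into the bound, and it is the technically delicate step; by contrast, once a fourth colour is available the winding rigidity disappears (four colours leave enough slack to unwind locally), which is precisely why the complexity collapses to $O(\log^* n)$ at $k=4$. Everything outside this winding argument is assembly of the cited colouring algorithms, the normal-form lemma, and the torus gap theorem.
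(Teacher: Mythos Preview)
The paper does not prove this theorem; it is stated as a known result and attributed to Brandt et al.\ \cite{Brandt2017}, with no proof given in the present paper. There is therefore nothing to compare your proposal against here.

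That said, two remarks on your sketch itself. First, your plan for the $\Omega(n)$ lower bound when $k\in\{2,3\}$ invokes the torus gap theorem of Brandt et al.\ to reduce the task to ruling out an $O(\log^* n)$ algorithm. But the gap theorem is from the very same paper whose result you are trying to establish, so you should check that in \cite{Brandt2017} the gap theorem is proved independently of the $3$-colouring hardness; otherwise the argument is circular. In fact the winding-number/indistinguishability construction you outline afterwards is essentially how Brandt et al.\ obtain the $\Omega(n)$ bound directly, without going through the gap, so you could simply drop the gap-theorem step and run the twisted-torus indistinguishability argument to full radius $\Omega(n)$. Second, your $O(\log^* n)$ upper bound for $k=4$ via axis-aligned $\{2,3\}$-block tilings is the right shape, but the stitching step (``the only delicate point'') is where the actual work sits: you need to verify that a constant-size palette of $4$-coloured templates exists for every combination of adjacent block sizes in both directions, including the odd-$n$ seam. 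This is true, but it is not quite as automatic as your wording suggests.
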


Both the 2-coloring problem and the 3-coloring problem are global (i.e.\ require $\Omega(n)$ rounds to compute). By Theorem~\ref{thm:local-dynamics-local} this implies that best responses also require at least $\Omega(n)$ rounds to converge.

In the case of $k \geq 5$, best responses always converge to a proper coloring. This is because each agent has degree $4$, so no matter what colors the neighbors choose, each agent can always choose a free color, which is their best response.

The interesting case is $k = 4$. Now the property that an agent always has a free color is no longer true: it can be that all four neighbors have the four different colors.

\begin{figure}[ht]
\centering
\includegraphics[width=0.35\textwidth]{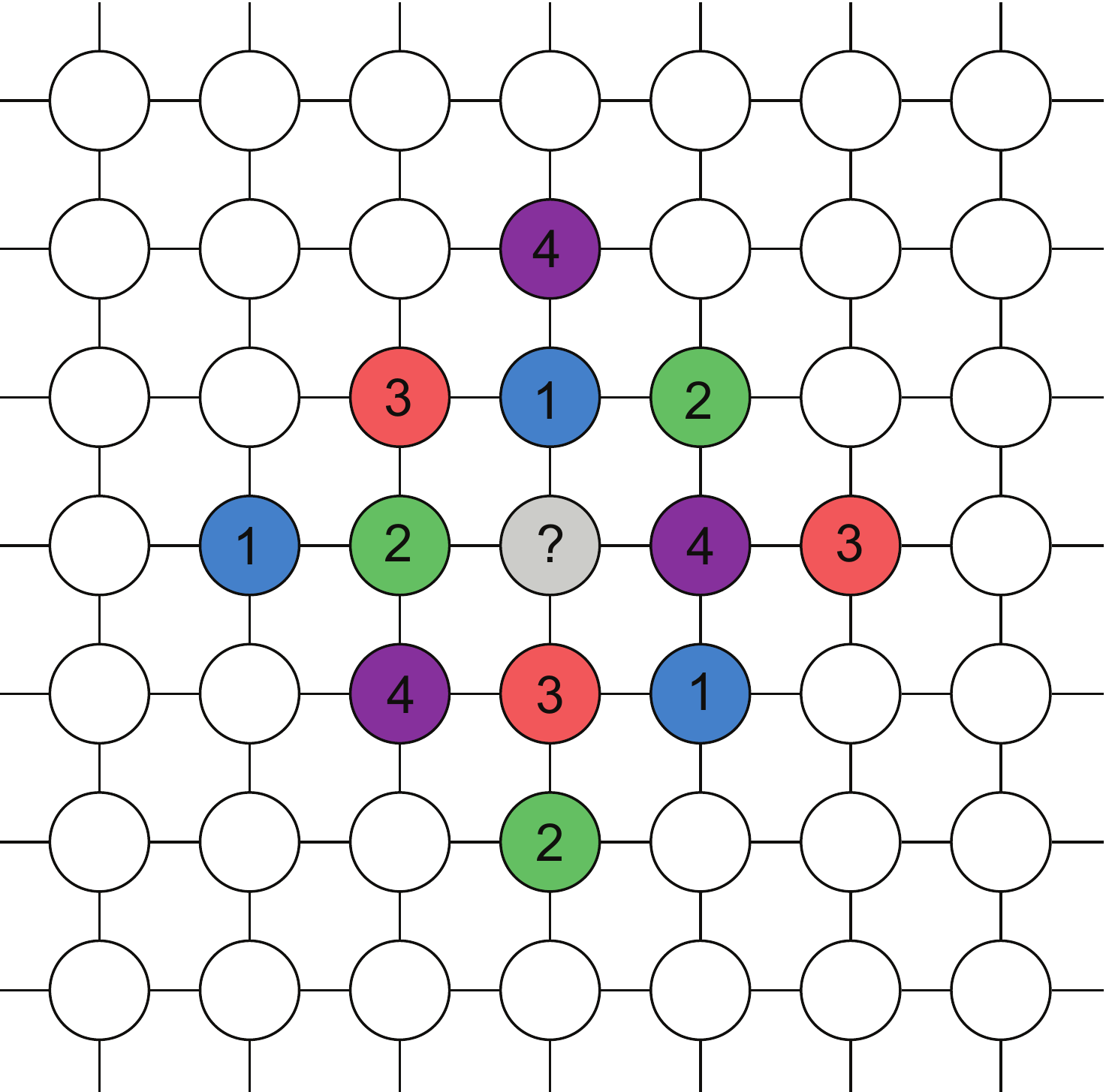}
\caption{\label{fig:fig2} Proof of Theorem~\ref{thm:coloring-bad-ne}. We show an initial coloring configuration on the grid that makes it impossible for best responses to converge to the proper 4-coloring. Every neighbor of the central node already has neighbors of all possible colors. This removes the incentive to switch.}
\end{figure}

\begin{theorem} \label{thm:coloring-bad-ne}
	There exists an initial strategy profile for every sufficiently large $n$-by-$n$ torus such that best responses do not converge to a 4-coloring in the 4-coloring game.
\end{theorem}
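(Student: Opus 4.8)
The plan is to exhibit an explicit initial coloring of the torus around a central node $v_{i,j}$ that is a *frozen* configuration: a proper partial coloring in which every node that could possibly want to switch colors already sees all four colors among its neighbors, and hence has no incentive to switch, so best responses never move. If the configuration is not a proper $4$-coloring — e.g.\ the central node shares a color with one of its neighbors — then best responses stay forever at a non-coloring, which is what we want. Concretely, I would start from the standard ``diagonal stripes'' proper $4$-coloring of $\mathbb Z^2$ (color of $v_{a,b}$ depending on $(a+2b)\bmod 4$, say, chosen so that each node has all three other colors in its neighborhood) and then perturb it in a bounded region around $v_{i,j}$ so that $v_{i,j}$ is forced into a conflict while all of its neighbors still witness all four colors. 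Figure~\ref{fig:fig2} already depicts exactly such a local gadget, so the argument is to verify that the depicted pattern has the claimed properties and extends consistently to the whole torus.

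The key steps, in order, are: (1) Fix the background coloring $\chi(v_{a,b}) = c\big((a + 2b) \bmod 4\big)$ on the whole $n\times n$ torus; check it is a proper coloring and, crucially, that \emph{every} node has neighbors of all three colors different from its own (this is where we need the four neighbors $v_{a\pm1,b}, v_{a,b\pm1}$ to carry colors $(a+2b\pm1)\bmod4$ and $(a+2b\pm2)\bmod4$, which together with $\chi(v_{a,b})$ exhaust all four residues — so in fact every node sees \emph{all four} colors counting itself, i.e.\ all three others among its neighbors). (2) Modify the color of the single node $v_{i,j}$ to equal the color of one of its neighbors, creating a monochromatic edge. (3) Verify the modified configuration is still ``frozen'': $v_{i,j}$ itself still has all three other colors among its four neighbors (the perturbation did not touch them), so its best response is not unique-valued in a way that forces a change — more precisely, since all four colors already appear in $B(v_{i,j})\setminus\{v_{i,j}\}$ is false; rather, since its neighbors carry three distinct colors and one color is repeated with $v_{i,j}$, switching to any color gives utility $0$ as well, so by the fixed tie-breaking rule it need not switch; and every neighbor $u$ of $v_{i,j}$ still sees all four colors in its own neighborhood (the only change among $u$'s neighbors is $v_{i,j}$, whose new color coincides with another neighbor of $u$, so $u$'s color-multiset still contains all four colors), hence $u$ has utility $0$ and switching cannot help it either, so $u$ does not switch. (4) Conclude that no node ever switches, so best responses are stuck at a configuration that is not a proper $4$-coloring; this holds for every $n$ large enough that the torus accommodates the gadget, i.e.\ $n \ge$ some small constant.

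The main obstacle is step (3): I must be careful about what ``best response'' and the tie-breaking rule actually do when a node's utility is $0$ and \emph{every} available action also yields utility $0$. In the coloring game utilities are only $0$ or $1$, so a node with no free color is indifferent among all colors; the definition says ``best response, with a fixed order on actions to break ties,'' so a node already playing a color is a best response to itself and need not move only if the tie-break does not force it off its current action — I should either argue the tie-break can be taken to favor the incumbent action (``lazy'' best response), or, more robustly, design the gadget so the relevant nodes have exactly one conflict and show that any color they could switch to also produces a conflict, making the set of best responses all of $\{1,2,3,4\}$ and hence the current color among them. I expect that with the diagonal background coloring every node has \emph{all four} colors present among itself and its four neighbors, which makes this indifference argument clean; the remaining care is just checking the perturbation at $v_{i,j}$ preserves this for $v_{i,j}$ and each of its four neighbors, and that the background coloring is consistent on the torus (which requires $4 \mid n$, or else a slightly different background coloring / choice of $n$ — a minor point to handle, e.g.\ by taking $n$ divisible by $4$, which is fine since we only need ``sufficiently large'' in the appropriate residue class, or by using a coloring valid for all $n$).
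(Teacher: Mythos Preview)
There is a genuine gap in your construction. In the background coloring $\chi(v_{a,b}) = (a+2b)\bmod 4$, the four neighbors of $v_{a,b}$ receive colors $c+1,\ c-1,\ c+2,\ c+2$ (mod $4$), i.e.\ only \emph{three} distinct colors, and the one color missing among them is precisely $c = \chi(v_{a,b})$ itself. Consequently, if you recolor the center $v_{i,j}$ from $c$ to a neighbor's color (say $c+1$), the center still has $c$ available: switching back to $c$ raises its utility from $0$ to $1$, so its \emph{unique} best response is to undo your perturbation. Your claim in step~(3) that ``switching to any color gives utility $0$ as well'' is therefore false, and no tie-breaking convention can rescue it. The neighbor analysis fails for the same reason: for $u = v_{i+1,j}$ (color $c+1$), the three non-center neighbors carry colors $c+2,\ c+3,\ c+3$, so after the perturbation $u$'s neighbors have colors $\{c+1,c+2,c+3,c+3\}$ and $u$ can switch to the now-free color $c$; your assertion that the new color of $v_{i,j}$ ``coincides with another neighbor of $u$'' is simply wrong here.

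The paper's gadget (Figure~\ref{fig:fig2}) works differently and this difference is essential: the center's four neighbors carry \emph{all four} colors, so the center has no proper color available at all; moreover, \emph{each} neighbor $u$ of the center has, among its three non-center neighbors, all three colors different from $c_u$. Thus whatever color the center holds, the unique conflicting neighbor already sees all four colors and cannot improve, while the other three neighbors are properly colored and have no incentive to move. Such a configuration cannot be produced by perturbing a single node of any proper coloring (a single-node perturbation is always reversible by that node); it requires designing the local pattern so that the center is genuinely blocked. Your plan of ``proper background plus one-node perturbation'' cannot yield a frozen non-coloring, and you should instead construct the blocked $5$-node pattern directly (and then verify it extends to a stable configuration on the rest of the torus).
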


\begin{proof}
	Consider the initial configuration shown in Figure~\ref{fig:fig2}. The center node has no choice that would give a proper coloring. No matter which color it chooses, the neighbor with that color already has neighbors of the other colors. Therefore it has no incentive to switch, and the best-responses converge to a suboptimal equilibrium.
\end{proof}

In contrast, by Theorem~\ref{thm:grid-coloring} there exists an algorithm for 4-coloring the grid in $O(\log^* n)$ rounds. By Theorem~\ref{thm:simulation-game} this implies that there exists a simulation game where best responses converge to a 4-coloring.

\section{Conclusion}\label{sec:conclusion}

This paper introduced a novel approach to classifying the complexity of Nash equilibria in graphical games, and to understanding the convergence behavior of best-response and even general local dynamics. By establishing a connection to the analysis of distributed graph problems, we showed that the Nash equilibria of graphical games correspond to locally verifiable labelings: solutions to graph problems which are verifiable with constant round algorithms. Impossibility and complexity results provably transfer from the distributed setting to the game setting. Thus, we can leverage this to give lower bounds for convergence of best-response dynamics, to quantify the time-constrained inefficiency of best responses when convergence is slow or even absent, and to present how these results can be used for implementing mechanisms where best responses converge to a Nash equilibrium that is a solution of the corresponding graph problem. We exemplified our results with some simple and well-known graphical games. Our findings also relate to the open question of strategy proof algorithms for reaching equilibria in graphical games, as posed by Kearns in his 2007 survey. We note that in contrast to algorithms like in~\cite{Kearns2007survey}, our discussion indeed pertains to how agents \emph{reach} equilibria while playing the game in a way that is rational with respect to their locally restricted knowledge. This poses an interesting avenue for further research.

In this work, we have only considered pure Nash equilibria. However, we note that our approach is much more general than that, since it first and foremost depends on information being locally restricted without making additional assumptions like the game being a potential game. One can prove similar results as in this work for mixed Nash equilibria, but even for different equilibrium concepts altogether, such as correlated equilibria. The latter correspond to a model where information is local, but can in some sense be exchanged between neighboring nodes, introducing correlated strategy distributions. A further simple extension of our model could see different local strategy update dynamics (e.g. ficitious play) at work, instead of restricting analysis to best response dynamics only. This also highlights the connection of this research direction with evolutionary graph theory, which as a generalized approach to evolutionary dynamics features players with similarly bounded rationality~\cite{lieberman:Nature:2005}. Furthermore, it is also easily conceivable to analyze a far wider range of graphical games or LVLs with our approach, and to even extend the analysis to infinite graphs. 

Naturally, there are limitations to our approach. As we mention in Section 2, we assume that graphs are of bounded constant degree. Graphs of low diameter do not give impossibility results based on information propagation, as information can spread quickly. The results for such a setting would look quite different; furthermore, computation in a setting that comes closer to a centralized one where nodes do not have as strongly limited information is not well understood. We see this issue in the fact that the complexity gap for LVLs is a function of the maximum degree.
 Another issue lies with the fact that we only show lower bounds for convergence. This means that for instances that are not efficiently solvable, it could be the case that true convergence is far slower than what our results give.

However, it still holds that what we have considered in this paper should only represent a first taste for how powerful the connection between game theory and distributed computing is, and we note at this point that interpreting agents interacting in games on networks as a distributed system is also highly intuitive. There are many possibilities to harness this, and we have only explored a very limited number of them. We hope that this work can serve as a proof of concept, and be the starting point of exciting further research.

\begin{acks}
	This work was supported by the European Research Council (ERC) projects CoG 864228 (AdjustNet) and CoG 863818 (ForM-SMArt), and the Academy of Finland, grant 314888.
\end{acks}

\bibliographystyle{ACM-Reference-Format}
\bibliography{bibliography}

\appendix

\section{Proof of Theorem~\ref{thm:local-dynamics-local}} \label{app:local-dynamics-proof}

In this section we prove Theorem~\ref{thm:local-dynamics-local}. The following lemma states more generally that the execution of best response dynamics can be simulated for any number of rounds.

\begin{lemma} \label{lem:simulation-dynamics}
	Assume that $(\mathcal{A}, u, N)$ is a graphical game and $\vec{a}$ is some strategy profile. A distributed algorithm that is given $\vec{a}$ as an input can simulate $T$ rounds of best responses, for some ordering of the play, in $O(\log^* n + T)$ rounds.
\end{lemma}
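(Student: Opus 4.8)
The plan is to reduce the task to distributed symmetry breaking: the algorithm first produces, once and for all, a fixed legal ordering of the play that can then be executed with only a bounded number of communication rounds per simulated fair round. As preprocessing I would have the algorithm compute a proper vertex coloring of $N$ with $c = O(\Delta^2)$ colors. On bounded-degree graphs this is exactly what Linial's algorithm achieves in $O(\log^* n)$ rounds starting from the $O(\log n)$-bit identifiers; in the randomized model the nodes first draw random $\Theta(\log n)$-bit names, which are distinct with high probability, and then run the same procedure. Since $\Delta$ is constant, $c$ is constant, and the coloring depends only on $N$ (not on $\vec{a}$), so it is computed once and reused across all $T$ fair rounds.

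I would then simulate one fair round in exactly $c$ communication rounds, taking the ordering of play in which the adversary schedules first every agent of color $1$, then every agent of color $2$, and so on. In sub-round $i$, every node sends its current strategy to all neighbors, and every node of color $i$ then adopts the best response to the strategies it has just received, breaking ties with the fixed order on actions. Correctness hinges on the fact that a color class is an independent set: the updates within sub-round $i$ do not affect one another's neighborhoods, so performing them simultaneously produces the same profile as performing them one at a time in any internal order, and the resulting sequence of sub-rounds is a faithful realization of one fair round under a single concrete schedule in which each agent best-responds to the profile present when it acts. Each node needs only its radius-$1$ neighborhood per sub-round, so one communication round per sub-round suffices.

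Iterating over $T$ fair rounds costs $cT$ communication rounds, and adding the $O(\log^* n)$ preprocessing gives $O(\log^* n) + cT = O(\log^* n + T)$ in total, after which every node knows the strategy it holds after $T$ fair rounds of best-response dynamics under this schedule. Since the lemma asks only for the simulation of \emph{some} ordering of the play, this is enough; note also that this is precisely the ordering needed to make the lemma applicable in the lower-bound direction, since a bound on convergence over \emph{all} orderings in particular bounds convergence under the coloring-induced one.

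The hard part will be expository rather than mathematical: one must check carefully that collapsing each color class into a single parallel update step really corresponds to a fair round in the sense defined in the model -- each agent acting exactly once, and best-responding to the strategy profile current at the instant it acts -- and that the fixed tie-breaking rule is honored throughout. The only quantitative ingredient is the $O(\log^* n)$ coloring bound, which is standard for bounded-degree graphs, and the constant-degree assumption is exactly what keeps the per-fair-round overhead $c$ constant.
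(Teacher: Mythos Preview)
Your proposal is correct and follows essentially the same two-phase approach as the paper: first compute a vertex coloring in $O(\log^* n)$ rounds, then use the color classes as a round-robin schedule so that one fair round costs a constant number of communication rounds. The only minor difference is that the paper colors $N^2$ (a distance-$2$ coloring with $\Delta^2+1$ colors) whereas you use an ordinary proper coloring of $N$; your observation that an independent color class already suffices---because a node's best response depends only on its immediate neighbors' strategies---is correct and in fact slightly simplifies the argument without changing any asymptotics.
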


\begin{proof}
	The simulation consists of two phases. In the first phase, the nodes compute a coloring of $N^2$ (the virtual network obtained by connecting all nodes at distance at most 2 in $N$) with $k = \Delta^2 + 1$ colors. That is, each node $v$ chooses a label $c(v)$ from $\{1,2,\dots,k\}$ such that any two nodes $u$ and $v$ within distance 2 in $N$ have different labels $c(u) \neq c(v)$. This can be computed in $O(\log^* n)$ rounds~\cite{Barenboim16deterministic}
	
	In the second phase this coloring is treated as a \emph{schedule}: at round $j$ of the second phase each node with color $i = j \bmod k$ is active, applies the best response to the current strategy profile, and sends its new strategy to its neighbors. The key is that any two nodes updating their strategy at the same time do so \emph{independently}: since they are not neighbors, their choices do not depend on each other. Therefore applying best responses at all nodes of the same color class is equivalent to letting the corresponding agents play in any sequential order: given an initial strategy profile $\vec{a}$, all orderings produce the same strategy profile $a'$. Simulating all color classes one by one therefore corresponds to \emph{some} ordering of sequential play.
	
	Since there are $k = \Delta^2 + 1$ color classes, simulating one fair round of best responses takes $k$ rounds in the LOCAL model. Since we assume $\Delta$ is a constant, simulating $T$ rounds of best responses can be done in $O(T)$ rounds. With the initial coloring step we have that the total running time of the simulation is $O(\log^* n + T)$, as required.
\end{proof}

It follows that simulating best responses until convergence can be done with an additive $O(\log^* n)$ overhead. 

\begin{proof}[Proof of Theorem~\ref{thm:local-dynamics-local}]
	First, assume that the best responses start from a constant or worst-case initial strategy profile. Each node can simply choose the same initial value and simulate $T(n)$ rounds of best responses by Lemma~\ref{lem:simulation-dynamics}. Since we assume that best responses converge for any order of play in $T(n)$ rounds, it follows that $T(n)$ rounds of simulation converge as well. Computing the simulation until convergence takes $O(\log^* n + T(n))$ rounds in the deterministic LOCAL model.
	
	Next, assume that the best responses start from a random initial strategy profile. Now it is no longer possible to use deterministic algorithms to run the simulation. Using the random inputs, each node can choose a random initial strategy. Then it can simulate best responses for $T(n)$ rounds by Lemma~\ref{lem:simulation-dynamics}. Since we assume that the best responses converge with high probability and the dynamics are deterministic given the initial configuration, the simulation also converges with high probability in $T(n)$ rounds. The simulation can be computed in $O(\log^* n + T(n))$ rounds in the randomized LOCAL model.
\end{proof}

\section{Graph-theoretic analysis for Section~\ref{sec:br-inefficiency}} \label{app:graph-proofs}

\subsection{Graph constructions for the best-shot public goods game}

The \emph{domination number} $\gamma(N)$ of a graph $N$ is the size of a minimum dominating set. The \emph{independence number} $\alpha(N)$ of a graph $N$ is the size of a maximum independent set. A dominating set is \emph{perfect} if every node not in the set is adjacent to exactly one node in the set (counting the node itself). The \emph{girth} of a graph is the length of its shortest cycle.
 
The following lemma states that there exist $d$-regular graphs of logarithmic girth such that all dominating sets are large and all independent sets are small.

\begin{lemma} \label{lem:bad-graph-pgg}
	For each $d \geq 3$ and each sufficiently large $n_0$, there exists a $d$-regular graph $N$ on $n \geq n_0$ nodes with girth $g = \Omega(\log_d n)$ such that the following hold.
	\begin{enumerate}[noitemsep]
		\item The domination number $\gamma(N)$ is at least $(1+\varepsilon)(\ln d / d)n$.
		\item The independence number $\alpha(N)$ is at most $((2+\varepsilon) \ln d / d)n$.
	\end{enumerate}
\end{lemma}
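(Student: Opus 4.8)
The plan is to take $N$ to be a uniformly random $d$-regular graph on $n$ vertices, sampled via the configuration (pairing) model and conditioned on the outcome being simple (an event of probability bounded away from $0$). Both extremal estimates are then first-moment computations, and the girth bound is imposed afterwards. All statements below hold with high probability (probability $1-o(1)$ as $n\to\infty$), and constants are stated for $d$ sufficiently large relative to $\varepsilon$.

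\emph{Independence number.} First I would bound $E[\#\{\text{independent sets of size }s\}]$. In the pairing model a fixed set $S$ of $s=o(n)$ vertices is independent exactly when none of its $ds$ incident half-edges is matched to another, which happens with probability $(1+o(1))e^{-ds^{2}/(2n)}$. Hence
\[
 E[\#\{\text{ind.\ sets of size }s\}] \;\le\; \binom{n}{s}\,e^{-ds^{2}/(2n)} \;\le\; \exp\!\bigl(s\ln(en/s)-\tfrac{ds^{2}}{2n}\bigr).
\]
Choosing $s=(2+\varepsilon)(\ln d/d)n$ makes the exponent equal to $s\bigl(\ln(en/s)-(1+\varepsilon/2)\ln d\bigr)$, which is negative since $\ln(en/s)=(1+o_{d}(1))\ln d$; a union bound over the $O(n)$ relevant sizes then gives $\alpha(N)<(2+\varepsilon)(\ln d/d)n$. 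The constant $2$ here is exactly the threshold of this calculation.

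\emph{Domination number.} Next I would run the first moment in the reverse direction, proving that w.h.p.\ no set below the target size dominates. For a fixed $s$-set $S$ a vertex outside $S$ is undominated with probability $(1+o(1))(1-s/n)^{d}$, and these failure events are --- up to lower-order corrections --- independent, so $\Pr[S\text{ dominates}]\le(1+o(1))\bigl(1-(1-s/n)^{d}\bigr)^{n-s}$. Multiplying by $\binom{n}{s}$, summing over sizes up to the target, and writing $s=\beta(\ln d/d)n$, the logarithm of the expected count comes out to $(1+o_{d}(1))\tfrac{n}{d}\bigl(\beta(\ln d)^{2}-d^{1-\beta}\bigr)$, which is strongly negative once $\beta$ is an appropriate constant; tracking the lower-order terms and optimising the choice of $\beta$ gives the stated lower bound on $\gamma(N)$. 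The two bounds then hold simultaneously with positive probability.

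\emph{Girth and the main obstacle.} The expected number of cycles of length below $g:=\tfrac{1}{10}\log_{d-1}n$ in $G_{n,d}$ is $\sum_{k<g}\tfrac{(d-1)^{k}}{2k}(1+o(1))=o(n)$, so with constant probability only $o(n)$ short cycles appear; deleting one vertex from each and renaming $n$ preserves the two extremal estimates, as only $o(n)$ vertices are touched. \textbf{The step I expect to be the real obstacle is restoring exact $d$-regularity after these deletions}: naively re-matching the $o(n)$ deficient vertices can create fresh short cycles, and since the number of deficient vertices is of the same order $d^{g}$ as a radius-$g$ ball, one cannot simply avoid all conflicts. I would get around this either by working directly in the random $d$-regular model \emph{conditioned} on girth $>g$ and redoing the two first-moment computations there (the conditioning concerns only local structure and perturbs the relevant expectations only in lower-order terms), or by citing a known explicit family of high-girth $d$-regular graphs with near-extremal independence and domination ratios. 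A secondary subtlety is that the domination first moment is delicate near its threshold, so the lower-order terms have to be handled with care.
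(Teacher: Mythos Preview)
Your overall strategy --- sample a random $d$-regular graph, use first-moment bounds for $\alpha$ and $\gamma$, then repair the girth --- is exactly what the paper does (the paper simply cites known results for the two extremal bounds rather than re-deriving them). The difference lies entirely in the girth-repair step, and the obstacle you flagged is resolved by a simpler device than either of your proposed workarounds.

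Instead of deleting a vertex on each short cycle and then worrying about restoring regularity, the paper uses an \emph{edge-switching} (``cycle cutting'') move: pick an edge $e=\{u,v\}$ on a shortest remaining cycle, find another edge $f=\{u',v'\}$ at distance at least $g$ from $e$ (such an edge exists by a counting argument since the graph has bounded degree), delete $e$ and $f$, and insert $\{u,v'\}$ and $\{u',v\}$. Every vertex keeps degree exactly $d$ throughout, the short cycle through $e$ is destroyed, and because $f$ was far from $e$ no new cycle of length below $g$ is created. Iterating over the $c=O(\sqrt{n})$ short cycles (the paper uses the standard fact that a random $d$-regular graph has $O(\sqrt{n})$ cycles of length $o(\log_d n)$ in expectation) yields the desired high-girth $d$-regular graph with only $2c$ edges disturbed in total.

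The extremal bounds then survive almost for free: any maximum independent set $I^*$ of the modified graph becomes an independent set of the original graph after removing at most $2c$ vertices (one endpoint per removed edge), so $\alpha(N')\le\alpha(N)+2c$; dually, any minimum dominating set of the modified graph becomes a dominating set of the original after adding at most $2c$ vertices. Since $c=O(\sqrt{n})=o(n)$, the $\Theta(n)$ bounds on $\alpha$ and $\gamma$ are preserved. Your vertex-deletion route forces you to rebuild regularity afterwards and then re-argue that no short cycles reappear --- precisely the difficulty you anticipated --- whereas edge switching sidesteps it entirely. Neither conditioning on high girth nor appealing to explicit constructions is needed.
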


\begin{proof}
	The existence of such graphs relies on properties of random $d$-regular graphs. Random $d$-regular graphs have no large independent sets and no small dominating sets with high probability. These graphs can then be modified by the standard cycle cutting technique (see e.g. \cite{bollobas04extremal,alon10constant}). The proof follows the proof of Alon~\cite[Lemma 21]{alon10constant}.
	
	A random $d$-regular graph $N$ has independence number $\alpha$ at most $((2+\varepsilon)\ln d / d)n$~\cite{FRIEZE1992} and domination number at least $((1+\varepsilon)\ln d / d)n$~\cite{Alon10star,alon10constant} with high probability. A random $d$-regular graph has only $c = O(\sqrt{n})$ cycles of length $o(\log_d n)$ in expectation~\cite{alon10constant}. Therefore it is possible to find a $d$-regular graph with all three properties.
	
	Graph $N$ might have short cycles. We will cut these one by one, starting with one of the shortest cycles. Pick an edge $e = \{u,v\}$ on that cycle. Since the graph $N$ has maximum degree $d$, there must be an edge $f = \{u', v'\}$ at distance $g$ from $e$ (we say that an edge $e$ is at distance $d$ from edge $f$ if the minimum distance between a node of $e$ and a node of $f$ is $d$). Remove $e$ and $f$ from $G$ and add edges $\{u,v'\}$ and $\{u', v\}$. Since these edges were far away, no short cycles were created.
	Continue in this way until there are no cycles of length less than $g$ left -- we are guaranteed that this process continues if we choose a suitable value for $g$.
	The graph now has girth $\Omega(\log_d n)$.
	
	Now consider any maximum independent set $I^*$ of $N'$. In the worst case each edge that was removed from $N$ goes between two nodes in $I^*$. If we remove one of each such pair of nodes from $I^*$, we obtain an independent set $I$ of $N$, the original graph. Since exactly two edges were removed for each cycle, we get that $|I| \geq |I^*| - 2c = |I^*| - O(\sqrt{n})$. By choosing a sufficiently large $n$, we have that $|I^*| \leq ((2+\varepsilon)\ln d / d)n$, as required.
	
	We can argue similarly about the domination number: if $D^*$ is a minimum dominating set of $N'$, then by adding at most $2c$ nodes to it we obtain a dominating set of $N$. Since all dominating sets of $N$ are large, $D^*$ must also be large.
\end{proof}

The second lemma states that there exist $d$-regular graphs of logarithmic girth (that is, graphs that look locally the same as graphs from Lemma~\ref{lem:bad-graph-pgg}), that are bipartite, and that have perfect dominating sets.

\begin{lemma} \label{lem:good-graph-pgg}
	For each $d \geq 3$ and every $n_0$, there exists a $d$-regular bipartite graph on $n \geq n_0$ nodes with a perfect dominating set and girth $g = \Omega(\log_d n)$.
\end{lemma}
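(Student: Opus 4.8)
The plan is to build the graph in three layers. Set $D := d(d-1)\ge 6$. First I take an auxiliary graph $\hat N$: a simple $D$-regular bipartite graph on $m$ vertices with girth $\Omega(\log_D m)=\Omega(\log_d m)$. Such graphs exist for arbitrarily large even $m$, either by explicit constructions or by exactly the random-regular-graph-plus-cycle-cutting argument used in the proof of Lemma~\ref{lem:bad-graph-pgg} (applied to $D$-regular bipartite graphs). Choosing $m$ large enough will give the final graph at least $n_0$ vertices. I will blow $\hat N$ up into $d$-vertex ``blocks'' and then hang one new dominating vertex off each block.

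\emph{Blow-up.} Replace each vertex $i$ of $\hat N$ by a block $V_i$ of $d$ fresh vertices, and split the $D=d(d-1)$ edges of $\hat N$ incident to $i$ among the $d$ vertices of $V_i$, exactly $d-1$ edges to each (any fixed assignment works, since $d(d-1)=d\cdot(d-1)$). An edge $\{i,j\}$ of $\hat N$ then becomes an edge between one vertex of $V_i$ and one vertex of $V_j$. The resulting graph $N'$ on $md$ vertices is $(d-1)$-regular, has no edge inside a block, and — because $\hat N$ is simple — has \emph{at most one} edge between any two distinct blocks. Colouring all of $V_i$ by the side of $i$ in the bipartition of $\hat N$ makes $N'$ bipartite. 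Moreover, the block-sequence of any cycle of $N'$, once consecutive repetitions are deleted, is a non-backtracking closed walk of $\hat N$ (a subpattern $V_i,V_j,V_i$ would require two edges between $V_i$ and $V_j$), so the girth of $N'$ is at least that of $\hat N$.

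\emph{Attaching dominators and girth.} For each $i$ add a new vertex $s_i$ joined to all of $V_i$; call the result $N$, with $n=m(d+1)$ vertices. Each vertex of $V_i$ has its $d-1$ neighbours in $N'$ plus the neighbour $s_i$, and $s_i$ has precisely the $d$ vertices of $V_i$ as neighbours, so $N$ is $d$-regular; placing each $s_i$ on the side opposite to $V_i$ extends the bipartition, so $N$ is bipartite. The set $S=\{s_i\}$ is a perfect dominating set: every $v\in V_i$ lies outside $S$ and has exactly one neighbour ($s_i$) in $S$, while $S$ dominates all of $N$; in particular $|S|=n/(d+1)$. For the girth, label each vertex of $N$ by the index of its block ($s_i\mapsto i$). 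Along a cycle $C$, consecutive labels are equal (across an $s_i$--$V_i$ edge) or adjacent in $\hat N$ (across an $N'$-edge); deleting consecutive repetitions from the cyclic label sequence gives a closed walk $W$ in $\hat N$. As above $W$ is non-backtracking (any two blocks share at most one edge), and $W$ is nonempty because a single block spans only the tree $K_{1,d}$ inside $N$, so $C$ cannot live in one block. A nonempty non-backtracking closed walk has length at least the girth, and the length of $C$ is at least $|W|$ (each label change is a distinct edge of $C$), hence the girth of $N$ is at least the girth of $\hat N$, which is $\Omega(\log_d m)=\Omega(\log_d n)$.

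The main obstacle is exactly this last girth bound: one must prevent the dominator vertices from creating short cycles. A naive matching-based blow-up of a $(d-1)$-regular auxiliary graph fails, since two blocks joined by a perfect matching together with their two dominators already form a $6$-cycle; this is why the blow-up is arranged so that \emph{at most one} edge joins any two blocks, which in turn forces $\hat N$ to be $d(d-1)$-regular rather than $(d-1)$-regular, and why one has to check that the block-trace of every cycle reduces to a non-backtracking closed walk of $\hat N$. Everything else — $d$-regularity, bipartiteness, and the perfect-dominating-set property — is immediate from the construction.
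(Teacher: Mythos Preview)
Your construction is correct and genuinely different from the paper's. The paper builds the graph ``inside out'': it starts from $k$ disjoint stars $K_{1,d}$ (whose centers form the perfect dominating set), connects the leaves by $d-1$ perfect matchings to achieve $d$-regularity, then performs cycle-cutting \emph{only on leaf edges} to raise the girth to $\Omega(\log_d n)$ while keeping the dominating set perfect, and finally takes a bipartite double cover. You instead build ``outside in'': you take a high-girth $d(d-1)$-regular bipartite auxiliary graph $\hat N$ as a black box, blow each vertex up into a block of $d$ leaves so that any two blocks share at most one edge, and only then attach a center to each block.

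Both arguments hinge on the same local picture (a star per dominating vertex), but they control the girth by opposite means: the paper by surgery on an arbitrary leaf graph, you by inheriting it from $\hat N$ via the non-backtracking projection. Your choice $D=d(d-1)$ is exactly what makes the projection non-backtracking, and you handle the one real danger (short cycles through two dominators) correctly by forcing at most one inter-block edge. The trade-off is modularity versus self-containment: your proof is cleaner and reuses a standard existence statement (which, incidentally, is precisely Lemma~\ref{lem:bipartite-high-girth} of this paper applied with degree $D$), while the paper's proof keeps the cycle-cutting explicit and in the same style as the neighbouring lemmas.
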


\begin{proof}
	The proof proceeds in three steps, utilising again the cycle cutting technique.
	\begin{enumerate}[noitemsep]
		\item Construct a graph $N$ with a perfect dominating set by taking a collection of $k$ stars on $d+1$ nodes (we have $n = k(d+1)$). Let $D$ denote the set of centers of the stars and $U$ the set of leaves. Find $d-1$ disjoint perfect matchings in the complete graph on $U$, and add the corresponding edges to $N$. We call these edges \emph{leaf edges}. This is always possible for a sufficiently large $k$.
		\item Graph $N$ might have short cycles. Cut these one by one, starting with one of the shortest cycles. Pick a leaf edge $e = \{u,v\}$ on that cycle. Since the graph $N$ has maximum degree $d$, there must be a leaf edge $f = \{u', v'\}$ at distance $g$ from $e$ (we say that an edge $e$ is at distance $d$ from edge $f$ if the minimum distance between a node of $e$ and a node of $f$ is $d$). Remove $e$ and $f$ from $G$ and add edges $\{u,v'\}$ and $\{u', v\}$. Since these edges were far away, no short cycles were created. Since we crossed two leaf edges, the set $D$ is still a perfect dominating set. 
		Continue in this way until there are no cycles of length less than $g$ left -- we are guaranteed that this process continues if we choose a suitable value for $g$.
		The graph now has girth $\Omega(\log_d n)$.
		\item To construct a bipartite graph that retains the properties from the previous steps, we take a \emph{bipartite double cover} $N' = (V', E')$ of $N = (V,E)$. For each $v \in V$ we take two copies $v_1, v_2 \in V'$. Denote the first copies by $V'_1$ and the second copies by $V'_2$. For each edge $\{u,v\} \in E$ we add crossed copies $\{u_1, v_2\}$ and $\{u_2, v_1 \}$. All edges go between the sets $V'_1$ and $V'_2$, so the graph $G'$ is bipartite. By construction we do not create any new short cycles, and since $|V'| = 2|V|$ we have that the girth of $G'$ is also $\Omega(\log_d n)$. Finally, the copies of nodes corresponding to the set $D$ form a perfect dominating set in $N'$.
	\end{enumerate}
\end{proof}

\subsection{Graph constructions for the minority game}

We first construct a graph that has a cut that cuts all edges and has high girth.

\begin{lemma} \label{lem:bipartite-high-girth}
	For every $d \geq 3$ and every sufficiently large $n_0$, there exists a $d$-regular bipartite graph on $n \geq n_0$ nodes with girth $g = \Omega(\log_d n)$.
\end{lemma}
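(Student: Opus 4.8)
The plan is to build the required graph as the \emph{bipartite double cover} of a high-girth $d$-regular graph --- exactly the construction already used in step~3 of the proof of Lemma~\ref{lem:good-graph-pgg} --- so that $d$-regularity and bipartiteness come for free and the only point needing care is the lower bound on the girth of the cover.

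First I would invoke the existence, for every sufficiently large $m$, of a $d$-regular graph $H$ on $m$ vertices with girth $g_H = \Omega(\log_d m)$. This is standard: a uniformly random $d$-regular graph has only $O(\sqrt{m})$ cycles of length $o(\log_d m)$ in expectation, and these can be eliminated by the edge-rerouting (``cycle cutting'') technique without creating new short cycles, exactly as in the proof of Lemma~\ref{lem:bad-graph-pgg} and \cite{bollobas04extremal,alon10constant}. In fact the graph produced by Lemma~\ref{lem:bad-graph-pgg} already has girth $\Omega(\log_d m)$, so one may just take it and ignore its independence and domination guarantees.

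Next, form the bipartite double cover $N = (V',E')$ of $H$: for each $v \in V(H)$ take two copies $v_1, v_2$, and for each edge $\{u,v\} \in E(H)$ add the crossed edges $\{u_1,v_2\}$ and $\{u_2,v_1\}$. Then $N$ has $n = 2m$ vertices, every edge runs between the first copies and the second copies so $N$ is bipartite, and each copy $v_i$ inherits the degree $d$ of $v$, so $N$ is $d$-regular. Taking $m \geq n_0/2$ gives $n \geq n_0$.

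It remains to show $g(N) = \Omega(\log_d n)$, and this is the step I expect to be the main (though mild) obstacle. Let $C = (w^{(0)}, w^{(1)}, \dots, w^{(\ell-1)}, w^{(0)})$ be a shortest cycle of $N$, with $w^{(j)} = v^{(j)}_{i_j}$. Since the copy index flips along every edge, the $i_j$ alternate, forcing $\ell$ to be even (so $\ell \geq 4$), and the projection $\pi(C) = (v^{(0)}, \dots, v^{(\ell-1)}, v^{(0)})$ is a closed walk of length $\ell$ in $H$. A backtrack in $\pi(C)$ at position $j$ would mean $v^{(j-1)} = v^{(j+1)}$; but $i_{j-1} = i_{j+1}$, so this forces $w^{(j-1)} = w^{(j+1)}$, contradicting that $C$ is a simple cycle with $\ell \geq 4$. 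Hence $\pi(C)$ is a cyclically non-backtracking closed walk in $H$, and any such walk has length at least the girth of $H$, so $\ell \geq g_H$. Since $\log_d n = \log_d m + O(1)$, we get $g(N) \geq g_H = \Omega(\log_d m) = \Omega(\log_d n)$. (If one prefers to avoid covering-map language, the same conclusion follows by directly mirroring the cycle-cutting argument of Lemma~\ref{lem:bad-graph-pgg} on a random $d$-regular bipartite graph, rerouting only pairs of edges that keep the two sides separated.)
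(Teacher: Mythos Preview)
Your proof is correct, including the girth-preservation step for the double cover: the projection of a shortest cycle in $N$ to $H$ is indeed a cyclically non-backtracking closed walk, and any such walk in $H$ has length at least $g_H$.

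The paper's own argument is essentially the one you sketch in your final parenthetical rather than your main route. It starts from an arbitrary bipartite $d$-regular graph and runs the cycle-cutting procedure of Lemmas~\ref{lem:good-graph-pgg} and~\ref{lem:bad-graph-pgg} directly, choosing each replacement pair of edges so that both still cross the bipartition; this removes short cycles while preserving $d$-regularity and bipartiteness in one pass. Your primary approach instead takes an already high-girth $d$-regular graph (e.g.\ the one from Lemma~\ref{lem:bad-graph-pgg}) and manufactures bipartiteness afterward via the double cover, paying with the short covering-map girth argument you supply. Both are equally valid; yours is more modular since it reuses an existing lemma wholesale, while the paper's is a one-liner once the cycle-cutting machinery is in hand. (A small editorial note that may spare you confusion: in the appendix the proof blocks following Lemmas~\ref{lem:bipartite-high-girth} and~\ref{lem:small-cuts-high-girth} are transposed, so the proof intended for the present lemma is the short one printed after the statement of Lemma~\ref{lem:small-cuts-high-girth}.)
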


\begin{proof}[Proof of Lemma~\ref{lem:small-cuts-high-girth}]
	Friedman~\cite{Friedman03} showed that the second largest eigenvalue, in absolute value, of the adjacency matrix of a random $d$-regular graph is at most $2\sqrt{d-1}+ \varepsilon$, for any $d \geq 3$ and any $\varepsilon > 0$, with high probability. Let $\lambda_n$ denote the smallest eigenvalue. The size of the maximum cut of a graph is known to be bounded by
	\[
		\Bigl(\frac{1}{2} + \frac{1}{2d}|\lambda_n|\Bigr)|E|
	\]
	(see e.g.\ Trevisan~\cite{Trevisan12}). This implies that with high probability, the maximum cut of a random $d$-regular graph has at most a $1/2 + \sqrt{d-1}/d + \varepsilon / (2d)$-fraction of the edges.
	
	Now we apply the cycle cutting technique once again. Find a $d$-regular graph $N$ that has no large cuts and $O(\sqrt{n})$ cycles of length less than $g = \Omega(\log_d n)$. Repeatedly cut each cycle of length less than $g$ to obtain a new graph $N'$. Now consider any maximum cut $C$ of $N'$. In the worst case, all edges that were removed from $N$ were cut edges of $C$, and the new edges are no longer cut edges. Since we removed $O(\sqrt{n})$ edges, the size of the cut in $N'$ is at most $(1/2 + \sqrt{d-1}/d + \varepsilon / (2d))|E| - O(\sqrt{n})$. For sufficiently large $n$, this is at most
	\[
		(1/2 + 1/\sqrt{d-1})|E|.
	\]
\end{proof}

Next we construct another graph that is locally indistinguishable from the first graph, but has no large cuts.

\begin{lemma} \label{lem:small-cuts-high-girth}
	For every $d \geq 3$ and every sufficiently large $n_0$, there exists a $d$-regular graph on $n \geq n_0$ nodes with girth $g = \Omega(\log_d n)$ such that the maximum cut is of size at most $(1/2 + 1/\sqrt{d-1})|E|$.
\end{lemma}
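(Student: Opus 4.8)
The plan is to produce $N'$ from a random $d$-regular graph that is simultaneously \emph{near-Ramanujan}, so that a spectral argument caps its maximum cut, and has \emph{few short cycles}, so that the cycle-cutting surgery used in Lemma~\ref{lem:bad-graph-pgg} can raise its girth without spoiling the cut bound. Concretely, fix a small $\varepsilon > 0$ and let $N_0$ be a uniformly random $d$-regular graph on $n$ vertices. By Friedman's theorem~\cite{Friedman03}, with high probability every eigenvalue of the adjacency matrix of $N_0$ other than $d$ has absolute value at most $2\sqrt{d-1} + \varepsilon$; in particular the smallest eigenvalue $\lambda_n$ satisfies $|\lambda_n| \le 2\sqrt{d-1} + \varepsilon$.

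Next I would apply the standard spectral bound on the maximum cut of a $d$-regular graph (see e.g.\ Trevisan~\cite{Trevisan12}): the maximum cut has at most $\bigl(\frac12 + \frac{|\lambda_n|}{2d}\bigr)|E|$ edges. Substituting Friedman's estimate, the maximum cut of $N_0$ is, with high probability, at most $\bigl(\frac12 + \frac{\sqrt{d-1}}{d} + \frac{\varepsilon}{2d}\bigr)|E|$. The crucial numerical fact is that $\frac{\sqrt{d-1}}{d} < \frac{1}{\sqrt{d-1}}$ for every $d \ge 3$, with a fixed positive gap $\frac{1}{d\sqrt{d-1}}$; this slack will absorb the remaining error terms.

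It remains to fix the girth. A random $d$-regular graph has only $O(\sqrt n)$ cycles of length less than $g = \Omega(\log_d n)$ in expectation~\cite{alon10constant}, so by Markov's inequality this count is $O(\sqrt n)$ with probability bounded away from $0$; intersecting this with Friedman's high-probability event shows there is a single $d$-regular graph $N$ with both a small maximum cut and $O(\sqrt n)$ short cycles. I would then repeatedly cut short cycles exactly as in the proof of Lemma~\ref{lem:bad-graph-pgg}: pick a shortest cycle, an edge $e$ on it and an edge $f$ at distance at least $g$ from $e$ — which exists because the ball of radius $g$ around $e$ contains only $O(d^{g})$ edges, far fewer than $|E| = dn/2$ once the constant in $g = \Theta(\log_d n)$ is chosen small enough — delete $e$ and $f$ and reconnect their endpoints crosswise. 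Each swap preserves $d$-regularity, destroys a short cycle, and creates no cycle shorter than roughly $g$; after $O(\sqrt n)$ swaps we obtain a graph $N'$ of girth $\Omega(\log_d n)$ that differs from $N$ in only $O(\sqrt n)$ edges. Finally, given any maximum cut $C$ of $N'$, undoing the swaps turns $C$ into a cut of $N$ of size at least $|C| - O(\sqrt n)$, hence $|C| \le \bigl(\frac12 + \frac{\sqrt{d-1}}{d} + \frac{\varepsilon}{2d}\bigr)|E| + O(\sqrt n)$, which is at most $\bigl(\frac12 + \frac{1}{\sqrt{d-1}}\bigr)|E|$ for all sufficiently large $n$, using the gap above together with $O(\sqrt n)/|E| = O(1/\sqrt n)$.

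The main obstacle, as in the companion constructions, is the probabilistic bookkeeping: checking that Friedman's spectral guarantee and the first-moment bound on short cycles can be combined to yield one graph with both properties, and that the cycle-cutting surgery moves the size of the maximum cut by only $O(\sqrt n)$ edges. The latter is precisely the ``undo the swaps on an optimal cut'' argument sketched above; everything else reduces to direct substitution into the known spectral (Friedman, Trevisan) and enumeration (\cite{alon10constant,bollobas04extremal}) estimates.
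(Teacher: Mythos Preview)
Your proposal is correct and follows essentially the same route as the paper: Friedman's near-Ramanujan bound combined with the spectral maximum-cut inequality $\bigl(\tfrac12 + \tfrac{|\lambda_n|}{2d}\bigr)|E|$ to cap the cut of a random $d$-regular graph, then the same cycle-cutting surgery (changing only $O(\sqrt{n})$ edges) to raise the girth while perturbing the cut size by $O(\sqrt{n})$, and finally absorbing the slack via $\sqrt{d-1}/d < 1/\sqrt{d-1}$. If anything, your write-up is more explicit than the paper's about the probabilistic bookkeeping and the numerical gap.
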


\begin{proof}
	Take any bipartite $d$-regular graph on $n = 2k \geq n_0$ nodes. Then use the cycle cutting technique from Lemmas~\ref{lem:good-graph-pgg} and \ref{lem:bad-graph-pgg}. Cut cycles until the graph has girth $g = \Omega(\log_d n)$. When connecting edges make sure that they still go between the two sides of the bipartition.
\end{proof}

\subsection{Proof of Lemma~\ref{lem:best-pgg-alg}}

\begin{proof}[Proof of Lemma~\ref{lem:best-pgg-alg}]
	We will start by showing that no fast distributed algorithm can find large independent sets or small dominating sets on the networks given by Lemma~\ref{lem:good-graph-pgg}. The proof is by a standard indistinguishability argument: we show that since algorithms cannot locally distinguish between a network of size $n$ from Lemma~\ref{lem:good-graph-pgg} and a network of size $n$ from Lemma~\ref{lem:bad-graph-pgg}, it must locally behave the same on both networks.
	
	Fix $T(n)$ to be some function in $o(\log_d n)$. Let $N$ and $N'$ be sufficiently large graphs from Lemma~\ref{lem:good-graph-pgg} and Lemma~\ref{lem:bad-graph-pgg}, respectively, both with girth $g > 2T(n)+1$. Since $T(n) = o(\log_d n)$, such graphs must exist for a sufficiently large $n$. 
	
	 Both graphs are $d$-regular, and therefore any distributed algorithm running in time $T(n)$ sees locally a $d$-regular tree at every node of both graphs. That is, the graphs are locally indistinguishable from each other.

	Since $N'$ does not have any large independent sets nor small dominating sets, any algorithm must output an independent set of expected size at most $\alpha(N')$ or a dominating set of expected size at least $\gamma(N')$. Therefore every node must have the same output distribution over the assignment of the random bits under any $T(n)$-round algorithm. This implies that the algorithm must put the node into an independent set with probability at most $\alpha(N')/n$ and into a dominating set with probability at least $\gamma(N')/n$ on $N'$. Since all neighborhoods of $N$ look indistinguishable from the neighborhoods of $N'$, any algorithm with running time $T(n)$ must behave in exactly the same way on $N$ as on $N'$: the expected size of an independent set is at most $\alpha(N')$ and the expected size of a dominating set is at least $\gamma(N')$.
\end{proof}

\end{document}